\newtheorem{proposition}{Proposition}
\begin{document}

\title{Cross-Layer Optimization of Two-Way Relaying for Statistical QoS Guarantees}

\author{Cen~Lin, Yuan~Liu,~\IEEEmembership{Student~Member,~IEEE}, and Meixia~Tao,~\IEEEmembership{Senior~Member,~IEEE}
\thanks{Manuscript received August 29, 2011; revised March 4, 2012.}
\thanks{The authors are with the Department of
Electronic Engineering at Shanghai Jiao Tong University, Shanghai,
200240, P. R. China. Email: \{lincen,yuanliu, mxtao\}@sjtu.edu.cn.}
\thanks{This work is supported by the National 973 project under grant 2012CB316100, the NSFC
under grant 60902019, and the NCET Program under grant NCET-11-0331. This paper was
presented in part at the IEEE ICC, Ottawa, Canada, June 2012 \cite{Lin}.}}

 \maketitle

\begin{abstract}

Two-way relaying promises considerable improvements on spectral
efficiency in wireless relay networks. While most existing works
focus on physical layer approaches to exploit its capacity gain, the
benefits of two-way relaying on upper layers are much less
investigated. In this paper, we study the cross-layer design and
optimization for delay quality-of-service (QoS) provisioning in
two-way relay systems. Our goal is to find the optimal transmission
policy to maximize the weighted sum throughput of the two users in
the physical layer while guaranteeing the individual statistical
delay-QoS requirement for each user in the datalink layer. This
statistical delay-QoS requirement is characterized by the QoS
exponent. By integrating the concept of effective capacity, the
cross-layer optimization problem is equivalent to a weighted sum
effective capacity maximization problem. We derive the jointly
optimal power and rate adaptation policies for both three-phase and
two-phase two-way relay protocols. Numerical results show that the
proposed adaptive transmission policies can efficiently provide QoS
guarantees and improve the performance. In addition, the throughput
gain obtained by the considered three-phase and two-phase protocols
over direct transmission is significant when the delay-QoS
requirements are loose, but the gain diminishes at tight delay
requirements. It is also found that, in the two-phase protocol, the
relay node should be placed closer to the source with more stringent
delay requirement.

\end{abstract}

\begin{keywords}

Cross-layer optimization, two-way relaying, quality-of-service
(QoS), delay-bound violation probability, effective capacity,
resource allocation.

\end{keywords}

\section{Introduction}
\setlength\arraycolsep{2pt}

The explosive developments of wireless communication have brought us
into a new era where higher data transmission rates and diverse
quality-of-service (QoS) provisioning are desperately expected.
Real-time applications, such as voice over IP and video streaming,
which are highly delay-sensitive, need reliable QoS guarantees. The
design merely at the physical layer may not ensure the desired QoS
requested by the service from upper layers. Only through the
interaction and optimization between different layers can such QoS
guarantees be fulfilled. This kind of cross-layer approach relaxes
the layering architecture of the conventional network model and
brings remarkable performance enhancement, which in turn could
result in high complexity. Therefore, to develop efficient
cross-layer methods with small information flows between layers is
interesting from both theoretical and practical perspectives.

Recently, two-way relaying has appeared as an advanced relay
technique to significantly boost the spectral efficiency in wireless
networks \cite{Zhang, Popovski, Rankov, Kim}. The notion of two-way
relaying is to apply the principle of network coding at the physical
layer so that only three or two time slots are needed when a pair of
nodes exchange information via a relay node, while the conventional
one-way relaying requires four time slots. Most previous efforts on
two-way relaying have focused on the design and optimization merely
in physical layer, such as analysis of capacity bounds
\cite{Kim,Liu}, adaptive network-coded constellation mapping
\cite{Akino}, joint channel coding and network coding design
\cite{Sheng,Tao}, precoding design with multiple antennas
\cite{Rui}, and resource allocation for throughput maximization in
OFDMA networks \cite{Jitvanichphaibool,Chen,Yuan,Mei}. Certainly, it
is desirable and promising to investigate the cross-layer design and
optimization of two-way relay architecture for QoS provisioning. To
our best knowledge, only two attempts have been made to study the
cross-layer design for two-way relaying \cite{Oechtering,
Ciftcioglu}. Specifically, \cite{Oechtering} and \cite{Ciftcioglu}
characterized the queue stability region for infinite backlogs with
the XOR and superposition-coding based decode-and-forward (DF)
protocols, respectively. Nevertheless, having stable queue does not
provide optimality in delay performance.

Motivated by the needs for bounded delay performance and small
inter-layer information flows, we adopt the concept of \emph{QoS
exponent} and consider the cross-layer optimization of two-way relay
systems for statistical delay QoS guarantees in this work. The QoS
exponent is used to characterize the statistical delay performance
metric, namely, the delay-bound violation probability, and is the
only requested information exchanged between the datalink layer and
the physical layer \cite{Chang}. As a result, our work builds on the
information theory to capture the performance limits at the physical
layer and the statistical QoS theory to model the delay performance
from upper layers. Specifically, we aim to find the optimal
transmission strategies to maximize the weighted sum rate while
guaranteeing the individual statistical delay QoS requirement for
each source node. Through integrating the theory of \emph{effective
capacity} \cite{Wu}, we convert this problem into a weighted sum
effective capacity maximization problem. After that, the optimal
power and rate adaptation policies as functions of both the network
channel state information (CSI) and the delay-QoS constraints are
developed.

The main contributions and results of this paper are summarized as
follows: We formulate the cross-layer optimization problem for
statistical QoS guarantees in two-way relay systems as weighted sum
effective capacity maximization. This problem is shown to be convex.
Furthermore, we propose the optimal transmission strategies with
joint power and rate adaptation for both three-phase and two-phase
two-way relaying schemes. Particularly, for the two-phase protocol,
the optimal channel state partition criterion for successive
decoding in the multiple-access (MAC) phase is derived. Numerical
results show that, compared with the conventional two-way direct
transmission, the considered three-phase and two-phase two-way relay
protocols can significantly improve the average throughput when the
statistical delay QoS requirements are loose. However, as delay
constraints become stringent, the performance gain reduces, and
eventually all the throughputs approach zero. It is also
demonstrated that, under the same delay-QoS constraint, the
three-phase protocol has higher weighted sum effective capacity than
the two-phase protocol in high signal-to-noise ratio (SNR) regime,
but has lower weighted sum effective capacity than the two-phase
protocol in low SNR regime. Moreover, we show that the three-phase
protocol has superiority over the two-phase protocol when the relay
is extremely adjacent to either of the sources. Meanwhile, it is
better to place the relay closer to the source with more stringent
delay requirement for the two-phase protocol.

The remainder of this paper is organized as follows. Section II
introduces some preliminaries on statistical QoS guarantees and the
related work. Section III presents the system model and demonstrates
our problem formulation. In Section IV and Section V, we propose the
optimal transmission strategies for the three-phase and two-phase
transmission, respectively. Numerical results are provided in
Section VI to verify the effectiveness of the proposed policies.
Finally, we conclude the paper in Section VII.

\section{Background on Statistical QoS Guarantees and Related Work}

\subsection{Preliminaries on Statistical QoS Guarantees}

Due to the time-varying nature of wireless channels, it is
infeasible to guarantee the hard delay bound for real-time traffic.
Therefore, statistical QoS metric, in the form of the delay-bound
violation probability, is commonly used to characterize the diverse
delay-QoS requirements.

Based on the large deviation principle, the author in \cite{Chang}
showed that under sufficient conditions, the stationary queue length
process $Q(t)$ converges in distribution to a random variable
$Q(\infty)$ satisfying that
\begin{equation}
-\lim_{Q_{th}\to\infty}\frac{\ln(Pr\{Q(\infty)>Q_{th}\})}{Q_{th}}=\theta,
\end{equation}
where $\theta>0$ is called \emph{QoS exponent}, denoting the
exponential decay rate of the distribution, and $Q_{th}$ is the
queueing length bound. According to the above equation, the
probability that the steady-state queue length exceeds a certain
bound $Q_{th}$ can be approximated by
\begin{equation}
Pr\{Q(\infty)>Q_{th}\}\approx e^{-\theta Q_{th}}.
\end{equation}
Similarly, the delay-bound violation probability can be stated as
\begin{equation}\label{e42}
Pr\{D>D_{th}\}\approx e^{-\theta \varphi(\theta)D_{th}},
\end{equation}
where $D$ and $D_{th}$ denote the queueing delay and delay bound,
respectively, and $\varphi(\theta)$ is known as the \emph{effective
bandwidth} of the arrival process under a given $\theta$. From
(\ref{e42}) we can conclude that the violation probability for a
given delay bound is characterized by the QoS exponent $\theta$.
Therefore, the dynamics of $\theta$ correspond to different delay
requirements. Obviously, a smaller $\theta$ implies a looser delay
QoS constraint, while a larger $\theta$ means a more strict delay
QoS constraint. In particular, when $\theta \rightarrow 0$, the
queueing system can tolerate an arbitrary delay, whereas when
$\theta \rightarrow \infty$, the system cannot allow any delay.

Inspired by the theory of effective bandwidth, Wu and Negi
introduced \emph{effective capacity} in \cite{Wu}, which is defined
as the maximum constant arrival rate that a given service process
can support in order to guarantee a statistical QoS requirement
specified by $\theta$. Analytically, the effective capacity, denoted
by $\mathscr{C}_{\emph{e}}(\theta)$, can be given by
\begin{equation}
\mathscr{C}_{\emph{e}}(\theta)=-\lim_{t\to\infty}\frac{1}{\theta
t}\ln\big(\mathbb{E}\big[e^{-\theta S[t]}\big]\big),
\end{equation}
where $S[t]=\sum_{i=1}^{t}R[i]$ is the time-accumulation of the
service process, and $\{R[i],i=1,2,\ldots\}$ corresponds to the
discrete-time stationary and ergodic stochastic service process.
$\mathbb{E}[\cdot]$ denotes the expectation.

Under the assumption that the block fading channel is independent
and identically distributed (i.i.d) over each time frame, the
sequence $\{R[i]\}$ is uncorrelated. Then the effective capacity can
be rewritten as
\begin{equation}
\mathscr{C}_{\emph{e}}(\theta)=-\frac{1}{\theta}\ln\big(\mathbb{E}\big[e^{-\theta
R[i]}\big]\big).
\end{equation}
Since the average arrival rate is equal to the average service rate
when the queue is in steady state, effective capacity can also be
regarded as the maximum throughput subject to a delay-QoS
constraint. In particular, as $\theta \rightarrow 0$, the optimal
effective capacity approaches the ergodic capacity of the channel.
On the other hand, as $\theta \rightarrow \infty$, the optimal
effective capacity is drawing to the zero-outage capacity of the
channel.

\subsection{Related Work on Delay-QoS Provisioning}

Delay-constrained cross-layer optimization has been studied
extensively in a variety of wireless networks
\cite{Berry,Wang,Hui,Liang,Jia,Tang,Du,Miller,Qiao}. For instance,
\cite{Berry} focused on the characterization of the stability region
and throughput optimal control for one-way relay systems. The delay
minimization problem based on power control and relay selection for
one-way relay systems with multiple antennas was considered in
\cite{Wang}. Joint power and subcarrier allocation for conventional
OFDMA networks with heterogeneous delay constraints was explored in
\cite{Hui,Liang}. Power allocation with statistical delay-QoS
provisioning for conventional point-to-point, one-way relaying, and
multiuser systems was studied in \cite{Jia}, \cite{Tang} and
\cite{Du}, respectively. Authors in \cite{Miller} proposed an
optimal scheduling algorithm for time division based multiuser
systems for statistical delay guarantees. Successive decoding order
optimization with fixed power assignment for MAC channel under
statistical delay constraints was investigated in \cite{Qiao}.

In view of all these existing literature, only the problems for
unidirectional communication were addressed, while the bidirectional
nature of the networks has not been fully exploited for delay-QoS
provisioning. Moreover, the impact of the cross-layer design and
optimization under delay constraints in two-way relaying has not
been revealed. Therefore, it is of great importance and necessity to
investigate the two-way relay networks for delay-QoS provisioning.

\section{System Overview and Problem Formulation}

\subsection{System Model}

The cross-layer two-way relay system is shown in
Fig.~\ref{fig:model}, where two source nodes, $A$ and $B$, exchange
messages via the relay node $R$. Each node operates in a half-duplex
manner. Like in \cite{Oechtering, Tang}, we consider that the
packets arriving at the relay node are forwarded immediately. As
illustrated in Fig.~\ref{fig:model}, in the datalink layer, two
first-in first-out (FIFO) queues are implemented at the two sources,
which consist of packets from upper layer to be transmitted. For QoS
provisioning, the packets transmitted from one source node to the
other are subject to the delay constraints, i.e., $\theta_{A}$ and
$\theta_{B}$. In the physical layer, each data packet is divided
into frames. Each frame is further divided into three or two slots
depending on the two-way relay protocols to be discussed in Section
III-C.

\begin{figure}[!t]
\begin{centering}
\includegraphics[scale=.55]{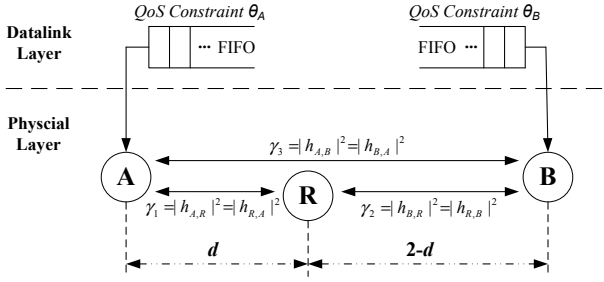}
\vspace{-0.1cm}
 \caption{Cross-layer two-way relay model.}\label{fig:model}
\end{centering}
\vspace{-0.3cm}
\end{figure}

\subsection{Channel Model}

We consider the scenario in which all nodes have perfect channel
state information. The channel coefficients of all links are assumed
to remain unchanged within each time frame but vary from one frame
to another. The instantaneous channel coefficient between node $i$
and $j$ is denoted as $h_{i,j}$, where $i,j \in \{A,B,R\}$ with
$i\neq j$. Here the channel reciprocity is assumed, i.e.,
$h_{i,j}=h_{j,i}$, which is valid in time-division duplex mode.
Without loss of generality, it is assumed that the additive noises
at all nodes are independent circularly symmetric complex Gaussian
random variables, each having zero mean and unit variance. For
notational convenience, we further define the instantaneous network
channel state information as a three-tuple
$\boldsymbol{\gamma}=(\gamma_1,\gamma_2, \gamma_3)$, where
$\gamma_{1}=|h_{A,R}|^2, \gamma_{2}=|h_{B,R}|^2,
\gamma_{3}=|h_{A,B}|^2$ (as shown in Fig.~\ref{fig:model}).

\subsection{Two-Way Relay Protocols}

Different two-way relay protocols have been studied in the
literature \cite{Popovski,Rankov,Kim,Liu,Tao}. In this paper, we
focus on the three-phase and two-phase two-way relay protocols with
DF strategy. Let $P_A$, $P_B$, and $P_R$ denote the transmit power
of nodes $A$, $B$, and $R$, respectively. Let $R_{A}$ and $R_{B}$
denote the achievable rates from node $A$ to node $B$ and from node
$B$ to node $A$, respectively. We further denote
$\boldsymbol{P}=[P_{A},P_{B},P_{R}]^T$ as the transmit power vector,
and $\boldsymbol{R}=[R_{A},R_{B}]^T$ as the bidirectional rate pair.

\begin{figure}[!t]
\begin{centering}
\includegraphics[scale=.59]{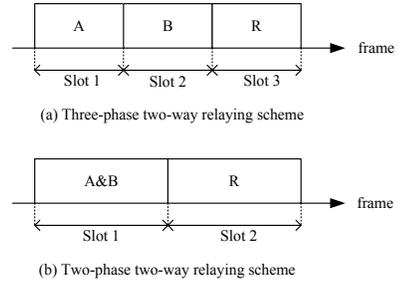}
\vspace{-0.1cm}
 \caption{Two-way relay transmission schemes.}\label{fig:scheme}
\end{centering}
\vspace{-0.3cm}
\end{figure}

\subsubsection{Three-Phase Two-Way Relaying}

In this protocol, the information exchange between A and B is
completed in three time slots. As shown in Fig.~\ref{fig:scheme}(a),
in the first time slot, source node $A$ transmits, while source node
$B$ and relay node $R$ listen. In the second time slot, node $B$
transmits, while $A$ and $R$ listen. In the third time slot, the
relay node transmits and both $A$ and $B$ listen. Ideally, the time
fraction of each slot in every transmission frame can be optimized.
In this work we only consider equal time assignment for simplicity,
and so is for the two-phase protocol.

The achievable rate region of the three-phase protocol with DF
strategy under a given transmit power vector $\boldsymbol{P}$ and
network CSI $\boldsymbol{\gamma}$ is \cite{Kim}
\begin{eqnarray}\label{eqn:e1}
\mathcal{R}(\boldsymbol{P},\boldsymbol{\gamma})=\{(R_{A},R_{B})\},
\end{eqnarray}
where
\begin{small}
\begin{numcases}
{R_{A}\leq}\frac{1}{3}\min\big\{C(\gamma_{1}P_{A}),C(\gamma_{3}P_{A})+C(\gamma_{2}P_{R})\big\},
&\hspace{-0.25cm}$\gamma_{1}>\gamma_{3}${\normalsize~\hss{(7)}}\nonumber\\%\label{eqn:e2}
\frac{1}{3}C(\gamma_{3}P_{A}), & \hspace{-0.51cm}${\rm
otherwise}${\normalsize~\hss{(8)}}\nonumber%\label{eqn:e3}
\end{numcases}
\end{small}

\begin{small}
\begin{numcases}{R_{B}\leq}
\frac{1}{3}\min\big\{C(\gamma_{2}P_{B}),C(\gamma_{3}P_{B})+C(\gamma_{1}P_{R})\big\},
& \hspace{-0.3cm}$\gamma_{2}>\gamma_{3}${\normalsize~\hss{(9)}}\nonumber\\%\label{eqn:e4}
\frac{1}{3}C(\gamma_{3}P_{B}), & \hspace{-0.71cm}${\rm otherwise}${\normalsize~\hss{(10)}}\nonumber%\label{eqn:e5}
\end{numcases}
\end{small}

\hspace{-0.35cm}with $C(x)=\log_{2}(1+x)$. It shows that if the
channel quality of the relay link for one data flow ($\gamma_1$ or
$\gamma_2$) is worse than that of the direct link ($\gamma_3$), then
direct transmission will be triggered for that flow.

\setcounter{equation}{10}

\subsubsection{Two-Phase Two-Way Relaying}

In this scheme, it takes two slots to finish one round of
information exchange between the two source nodes. As shown in
Fig.~\ref{fig:scheme}(b), in the first time slot, which is termed as
multiple access (MAC) phase, the nodes $A$ and $B$ simultaneously
transmit signals to the relay node $R$. Due to the half-duplex
constraint, there is no direct link between nodes $A$ and $B$. In
the second time slot, which is known as broadcast (BC) phase, the
relay node broadcasts signals to both $A$ and $B$.

The achievable rate region of the two-phase two-way relaying with DF
strategy under a given transmit power vector $\boldsymbol{P}$ and
network CSI $\boldsymbol{\gamma}$ is \cite{Popovski,Rankov,Kim}

\begin{equation}\label{eqn:e6}
\mathcal{R}(\boldsymbol{P},\boldsymbol{\gamma})=\mathcal{C}_{MAC}(P_{A},P_{B},\boldsymbol{\gamma})\cap\mathcal{C}_{BC}(P_{R},\boldsymbol{\gamma}),
\end{equation}
where $\mathcal{C}_{MAC}$ and $\mathcal{C}_{BC}$ are the achievable
rate regions of the MAC and BC phases, respectively, and can be
given by
\begin{eqnarray}
\mathcal{C}_{MAC}(P_{A},P_{B},\boldsymbol{\gamma})=\{(R_{A},R_{B}):R_{A}\leq
\frac{1}{2}C(\gamma_{1}P_{A}), \nonumber\\ R_{B}\leq
\frac{1}{2}C(\gamma_{2}P_{B}), R_{A}+R_{B}\leq
\frac{1}{2}C(\gamma_{1}P_{A}+\gamma_{2}P_{B}) \},
\end{eqnarray}
\begin{eqnarray}\label{eqn:e7}
\mathcal{C}_{BC}(P_{R},\boldsymbol{\gamma})=\{(R_{A},R_{B}):&&R_{A}\leq
\frac{1}{2}C(\gamma_{2}P_{R}),\nonumber\\ &&R_{B}\leq
\frac{1}{2}C(\gamma_{1}P_{R})\}.
\end{eqnarray}
Note that both $\mathcal{C}_{MAC}$ and $\mathcal{C}_{BC}$ are convex
sets, so is their intersection.

\subsection{Problem Formulation}

In this paper, our objective is to find the optimal transmission
policies to maximize the weighted sum rate of the two-way relay
system while satisfying the individual delay requirement at each
node. As stated before, the effective capacity can be viewed as the
maximum throughput under the constraint of QoS exponent in steady
state. Hence, we can formulate an equivalent problem, which is to
maximize the weighted sum effective capacity for given delay
constraints of node $A$ and node $B$, i.e., $\theta_{A}$ and
$\theta_{B}$. Our resource allocation policies are based on
cross-layer parameters, specifically, the instantaneous network CSI
$\boldsymbol{\gamma}=(\gamma_1,\gamma_2, \gamma_3)$ and the
delay-QoS requirements
$\boldsymbol{\theta}=(\theta_{A},\theta_{B})$. Correspondingly, we
define
$\boldsymbol{\epsilon}\triangleq(\boldsymbol{\gamma},\boldsymbol{\theta})$.
Therefore, the problem can be formulated as follows,
\begin{eqnarray}
\textbf{P1}:&&\max_{\boldsymbol{P}(\boldsymbol{\epsilon}),\boldsymbol{R}(\boldsymbol{\epsilon})}~-\frac{\omega_{A}}{\theta_{A}}
\ln\big(\mathbb{E}_{\boldsymbol\gamma}[e^{-\theta_{A}R_{A}(\boldsymbol{\epsilon})}]\big)\nonumber\\
&&~~~~~~~~~~~~-\frac{\omega_{B}}{\theta_{B}}\ln\big(\mathbb{E}_{\gamma}[e^{-\theta_{B}R_{B}(\boldsymbol{\epsilon})}]\big)\label{eqn:e11}\\
&&~~~s.t.~~~~~~\mathbb{E}_{\boldsymbol\gamma}[P_{A}]\leq \overline{P_{A}}\label{eqn:e8}\\
&&~~~~~~~~~~~~~\mathbb{E}_{\boldsymbol\gamma}[P_{B}]\leq \overline{P_{B}}\label{eqn:e9}\\
&&~~~~~~~~~~~~~\mathbb{E}_{\boldsymbol\gamma}[P_{R}]\leq \overline{P_{R}}\label{eqn:e10}\\
&&~~~~~~~~~~~~~\boldsymbol{R}(\boldsymbol{\epsilon})\in
\mathcal{R}(\boldsymbol{P}(\boldsymbol{\epsilon}),\boldsymbol{\gamma})\label{eqn:e12}\\
&&~~~~~~~~~~~~~\boldsymbol{P}(\boldsymbol{\epsilon})\succeq
0,\label{eqn:e13}
\end{eqnarray}
where $\omega_{A},\omega_{B}$ are the weights assigned to the two
users, satisfying $\omega_{A}+\omega_{B}=1$, and
$\overline{P_{A}},\overline{P_{B}},\overline{P_{R}}$ are the
long-term power constraints of node $A$, $B$ and $R$, respectively.
$\mathbb{E}_{\boldsymbol\gamma}[\cdot]$ emphasizes that the
expectation is with regard to $\boldsymbol{\gamma}$.
$\boldsymbol{P}(\boldsymbol{\epsilon})$ and
$\boldsymbol{R}(\boldsymbol{\epsilon})$ denote the power and rate
adaptation policies to be optimized, which are functions of
$\boldsymbol{\epsilon}$. The instantaneous rate region
$\mathcal{R}(\boldsymbol{P}(\boldsymbol{\epsilon}),\boldsymbol{\gamma})$
is defined in (\ref{eqn:e1}) for the three-phase protocol, or in
(\ref{eqn:e6}) for the two-phase protocol. Note that
$\mathcal{R}(\boldsymbol{P}(\boldsymbol{\epsilon}),\boldsymbol{\gamma})$
is a convex space spanned by the power sets
$\boldsymbol{P}(\boldsymbol{\epsilon})$.

It is proved in \cite{Du} that the weighted sum effective capacity
is a concave function of the powers in multiuser systems with direct
transmission. Using the similar method, we can prove the weighted
sum effective capacity in the two-way relay system as given in
(\ref{eqn:e11}) is also concave of
$\boldsymbol{P}(\boldsymbol{\epsilon})$. The main reason is that the
achievable rate pair $R_{A}$ and $R_{B}$ are both concave with
respect to the power vector $\boldsymbol{P}$. In addition, the power
constraints (\ref{eqn:e8})-(\ref{eqn:e10}) and (\ref{eqn:e13}) are
affine. Thus, the problem $\textbf{P1}$ is a convex optimization
problem, and there exists a unique and optimal solution. In the next
two sections, we shall develop the optimal power and rate adaptation
policies of the given problem for the three-phase and two-phase
protocols, respectively.

\begin{figure}[!t]
\begin{centering}
\includegraphics[scale=.6]{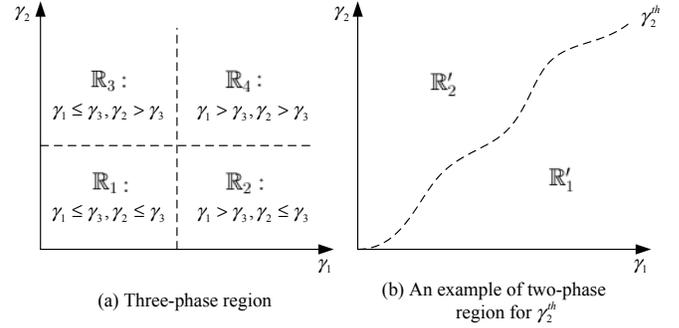}
\vspace{-0.01cm}
 \caption{Cross-layer two-way relaying channel state region.}\label{fig:csregion}
\end{centering}
\vspace{-0.3cm}
\end{figure}

\section{Optimal Policy for Three-Phase Two-way Relaying}

In this section, we first derive the optimal transmission policy for
the three-phase two-way relaying subject to general delay QoS
requirements $\boldsymbol{\theta}=(\theta_{A},\theta_{B})$, for
which
$\mathcal{R}(\boldsymbol{P}(\boldsymbol{\epsilon}),\boldsymbol{\gamma})$
is given in (\ref{eqn:e1}). Then we consider the limiting case when
both $\theta_{A}$ and $\theta_{B}$ approach zero, i.e., the ergodic
capacity problem.

\subsection{Optimal Policy}

We define the Lagrangian of problem \textbf{P1} as
\begin{eqnarray}\label{eqn:e30}
&&\mathcal
{L}\big(\boldsymbol{P}(\boldsymbol{\epsilon}),\boldsymbol{R}(\boldsymbol{\epsilon}),\boldsymbol{\lambda}\big)\nonumber\\
&=&-\frac{\omega_{A}}{\theta_{A}}
\ln\big(\mathbb{E}_{\boldsymbol\gamma}[e^{-\theta_{A}R_{A}(\boldsymbol{\epsilon})}]\big)-\frac{\omega_{B}}{\theta_{B}}\ln\big(\mathbb{E}_{\gamma}[e^{-\theta_{B}R_{B}(\boldsymbol{\epsilon})}]\big)\nonumber\\
&&+\lambda_{A}\big(\overline{P_{A}}-\mathbb{E}_{\boldsymbol\gamma}[P_{A}(\boldsymbol{\epsilon})]\big)+\lambda_{B}\big(\overline{P_{B}}-\mathbb{E}_{\boldsymbol\gamma}[P_{B}(\boldsymbol{\epsilon})]\big)\nonumber\\
&&+\lambda_{R}\big(\overline{P_{R}}-\mathbb{E}_{\boldsymbol\gamma}[P_{R}(\boldsymbol{\epsilon})]\big),
\end{eqnarray}
where $\boldsymbol{\lambda}=[\lambda_A,\lambda_B,\lambda_R]^T$ are
the Lagrange multipliers related to the power constraints
(\ref{eqn:e8})-(\ref{eqn:e10}). Then the dual problem of \textbf{P1}
can be stated as
\begin{eqnarray}
\textbf{P2}:~&&\min_{\boldsymbol{\lambda}\geq0}~\max_{\begin{subarray}{c}\boldsymbol{P}(\boldsymbol{\epsilon})\succeq0,\\\boldsymbol{R}(\boldsymbol{\epsilon})\in
\mathcal{R}\end{subarray}}~\mathcal
{L}\big(\boldsymbol{P}(\boldsymbol{\epsilon}),\boldsymbol{R}(\boldsymbol{\epsilon}),\boldsymbol{\lambda}\big).\nonumber
\end{eqnarray}
Note that the subgradient method can be used to update
$\boldsymbol{\lambda}$ toward the optimal $\boldsymbol{\lambda}^*$
as follows
\begin{eqnarray}
\boldsymbol{\lambda}^{(i+1)}=\boldsymbol{\lambda}^{(i)}-\boldsymbol{s}^{(i)}\big(\boldsymbol{\overline{P}}-\mathbb{E}_{\boldsymbol\gamma}[\boldsymbol{P}(\boldsymbol{\epsilon})]\big),
\end{eqnarray}
where the subscript $i$ denotes the iteration index, and
$\boldsymbol{s}^{(i)}$ is the vector of step size designed properly.

According to the achievable rate region defined in (\ref{eqn:e1}),
we divide the channel states $\boldsymbol\gamma$ into four regions
as shown in Fig.~\ref{fig:csregion}(a). In the following, we provide
the optimal transmission policy for each channel region, and the
detailed derivations are given in Appendix~\ref{app:3phase}.

\subsubsection{Region $\mathbb R_{1}(\gamma_{1}\leq\gamma_{3},\gamma_{2}\leq\gamma_{3})$} In this case,
%$\gamma_{1}\leq\gamma_{3}, \gamma_{2}\leq\gamma_{3}$.%
the achievable rate pair $R_{A}$ and $R_{B}$ satisfy (8) and (10),
respectively, which means that the bidirectional links follow direct
transmission. It is obvious that the optimal rates are exactly the
capacity bound, given by
\begin{eqnarray}
R_{A}(\boldsymbol{\epsilon})=\frac{1}{3}C\big(\gamma_{3}P_{A}(\boldsymbol{\epsilon})\big),\label{eqn:e42}\\
R_{B}(\boldsymbol{\epsilon})=\frac{1}{3}C\big(\gamma_{3}P_{B}(\boldsymbol{\epsilon})\big).\label{eqn:e43}
\end{eqnarray}
Substituting the above into the Lagrangian function (\ref{eqn:e30})
to eliminate the rate variables, we can then obtain the closed-form
expressions of the optimal power allocation as
\begin{eqnarray}\label{eqn:e14}
\begin{cases}
P_{A}(\boldsymbol{\epsilon})=\bigg[\bigg(\frac{\sigma\lambda_{A}\phi_{1}}{\omega_{A}}\bigg)^{-\frac{3}{\beta_{A}+3}}\gamma_{3}^{-\frac{\beta_{A}}{\beta_{A}+3}}-\gamma_{3}^{-1}\bigg]^{+}\\
P_{B}(\boldsymbol{\epsilon})=\bigg[\bigg(\frac{\sigma\lambda_{B}\phi_{2}}{\omega_{B}}\bigg)^{-\frac{3}{\beta_{B}+3}}\gamma_{3}^{-\frac{\beta_{B}}{\beta_{B}+3}}-\gamma_{3}^{-1}\bigg]^{+}\\
P_{R}(\boldsymbol{\epsilon})=0,
\end{cases}
\end{eqnarray}
where $\sigma=3\ln 2$, $\beta_{i}=\frac{\theta_{i}}{\ln 2}, i \in
\{A,B\}$, and
\begin{equation}
\phi_{1}=\mathbb{E}_{\boldsymbol\gamma}\big[e^{-\theta_{A}R_{A}(\boldsymbol{\epsilon})}\big],
\end{equation}
\begin{equation}
\phi_{2}=\mathbb{E}_{\boldsymbol\gamma}\big[e^{-\theta_{B}R_{B}(\boldsymbol{\epsilon})}\big].
\end{equation}
Note that $\phi_{1}$, $\phi_{2}$ are expectations over all the
regions, and should be updated with $\boldsymbol\lambda$ in each
iteration of the dual problem.

\subsubsection{Region $\mathbb R_{2}(\gamma_{1}>\gamma_{3},
\gamma_{2}\leq\gamma_{3})$} In this region, node $A$ transmits
signals with the help of node $R$ while node $B$ adopts direct
transmission. According to (7) and (10), the optimal rate allocation
is given by
\begin{equation}\label{eqn:e31}
R_{A}(\boldsymbol{\epsilon})=\frac{1}{3}\min\big\{C\big(\gamma_{1}P_{A}(\boldsymbol{\epsilon})\big),C\big(\gamma_{3}P_{A}(\boldsymbol{\epsilon})\big)+C\big(\gamma_{2}P_{R}(\boldsymbol{\epsilon})\big)\big\},
\end{equation}

\begin{equation}\label{eqn:e44}
R_{B}(\boldsymbol{\epsilon})=\frac{1}{3}C\big(\gamma_{3}P_{B}(\boldsymbol{\epsilon})\big).
\end{equation}
Then the optimal power allocation can be obtained as
\begin{eqnarray}\label{eqn:e15}
\begin{cases}
P_{A}(\boldsymbol{\epsilon})=\big[\widehat{P}_{A}(\boldsymbol{\epsilon})\big]^{+}\\
P_{B}(\boldsymbol{\epsilon})=\bigg[\bigg(\frac{\sigma\lambda_{B}\phi_{2}}{\omega_{B}}\bigg)^{-\frac{3}{\beta_{B}+3}}\gamma_{3}^{-\frac{\beta_{B}}{\beta_{B}+3}}-\gamma_{3}^{-1}\bigg]^{+}\\
P_{R}(\boldsymbol{\epsilon})=\frac{(\gamma_{1}-\gamma_{3})P_{A}(\boldsymbol{\epsilon})}{\gamma_{2}[1+\gamma_{3}P_{A}(\boldsymbol{\epsilon})]},
\end{cases}
\end{eqnarray}
where $\widehat{P}_{A}(\boldsymbol{\epsilon})$ is the solution of
the following equation
\begin{equation}\label{eqn:e16}
\frac{\omega_{A}\gamma_{1}}{\sigma\phi_{1}}\big[1+\gamma_{1}\widehat{P}_{A}(\boldsymbol{\epsilon})\big]^{-\frac{\beta_{A}+3}{3}}-\frac{\lambda_{R}(\gamma_{1}-\gamma_{3})}{\gamma_{2}\big[1+\gamma_{3}\widehat{P}_{A}(\boldsymbol{\epsilon})\big]^{2}}-\lambda_{A}=0.
\end{equation}
We use the simple bisection method to obtain
$\widehat{P}_{A}(\boldsymbol{\epsilon})$ since (\ref{eqn:e16}) is a
monotonically decreasing function of
$\widehat{P}_{A}(\boldsymbol{\epsilon})$.

\subsubsection{Region $\mathbb R_{3}(\gamma_{1}\leq\gamma_{3}, \gamma_{2}>\gamma_{3})$} Node $B$ transmits signals via the assistance of
the relay while node $A$ adopts direct transmission. This case is
similar to that of $\mathbb R_{2}$ and we omit the results here.

%Similar to the condition of $\mathbb R_{2}$, the optimal resource
%allocation policies are given by exchanging $A$ and $B$, $1$ and
%$2$.

\subsubsection{Region $\mathbb R_{4}(\gamma_{1}>\gamma_{3}, \gamma_{2}>\gamma_{3})$} In this region, both source nodes
need the relay node's help. According to the achievable rate pair
(7) and (9), the optimal rate should follow
\begin{equation}\label{eqn:e32}
R_{A}(\boldsymbol{\epsilon})=\frac{1}{3}\min\big\{C\big(\gamma_{1}P_{A}(\boldsymbol{\epsilon})\big),C\big(\gamma_{3}P_{A}(\boldsymbol{\epsilon})\big)+C\big(\gamma_{2}P_{R}(\boldsymbol{\epsilon})\big)\big\},
\end{equation}
\begin{equation}\label{eqn:e33}
R_{B}(\boldsymbol{\epsilon})=\frac{1}{3}\min\big\{C\big(\gamma_{2}P_{B}(\boldsymbol{\epsilon})\big),C\big(\gamma_{3}P_{B}(\boldsymbol{\epsilon})\big)+C\big(\gamma_{1}P_{R}(\boldsymbol{\epsilon})\big)\big\}.
\end{equation}

The associated optimal power allocation is as follows. Define
\begin{equation}\label{eqn:tau}
\tau\triangleq\frac{\gamma_{1}(\gamma_{1}-\gamma_{3})}{\gamma_{2}(\gamma_{2}-\gamma_{3})}.
\end{equation}

If $\tau\leq1$, i.e., $\gamma_{3}<\gamma_{1}\leq\gamma_{2}$, we have
\begin{eqnarray}\label{eqn:e17}
\begin{cases}
P_{A}(\boldsymbol{\epsilon})=\big[\widehat{P}_{A}(\boldsymbol{\epsilon})\big]^{+}\\
P_{B}(\boldsymbol{\epsilon})=\frac{\tau P_{A}(\boldsymbol{\epsilon})}{1+(1-\tau)\gamma_{3}P_{A}(\boldsymbol{\epsilon})}\\
P_{R}(\boldsymbol{\epsilon})=\frac{(\gamma_{1}-\gamma_{3})P_{A}(\boldsymbol{\epsilon})}{\gamma_{2}[1+\gamma_{3}P_{A}(\boldsymbol{\epsilon})]},\\
\end{cases}
\end{eqnarray}
where $\widehat{P}_{A}(\boldsymbol{\epsilon})$ is the solution of
\begin{eqnarray}\label{eqn:e18}
&&\frac{\omega_{A}\gamma_{1}}{\sigma\phi_{1}}\big[1+\gamma_{1}\widehat{P}_{A}(\boldsymbol{\epsilon})\big]^{-\frac{\beta_{A}+3}{3}}+\frac{\tau\omega_{B}\gamma_{2}}{\sigma\phi_{2}\big[1+(1-\tau)\gamma_{3}\widehat{P}_{A}(\boldsymbol{\epsilon})\big]^{2}}\nonumber\\
&&\times\bigg[1+\frac{\tau\gamma_{2}\widehat{P}_{A}(\boldsymbol{\epsilon})}{1+(1-\tau)\gamma_{3}\widehat{P}_{A}(\boldsymbol{\epsilon})}\bigg]^{-\frac{\beta_{B}+3}{3}}-\frac{\lambda_{R}(\gamma_{1}-\gamma_{3})}{\gamma_{2}\big[1+\gamma_{3}\widehat{P}_{A}(\boldsymbol{\epsilon})\big]^{2}}\nonumber\\
&&-\frac{\tau\lambda_{B}}{\big[1+(1-\tau)\gamma_{3}\widehat{P}_{A}(\boldsymbol{\epsilon})\big]^{2}}-\lambda_{A}=0.
\end{eqnarray}
Note that the bisection method can be used to obtain
$\widehat{P}_{A}(\boldsymbol{\epsilon})$ since (\ref{eqn:e18}) is a
monotonically decreasing function of
$\widehat{P}_{A}(\boldsymbol{\epsilon})$.

If $\tau>1$, i.e., $\gamma_{3}<\gamma_{2}<\gamma_{1}$, we have
\begin{eqnarray}\label{eqn:e19}
\begin{cases}
P_{A}(\boldsymbol{\epsilon})=\frac{P_{B}(\boldsymbol{\epsilon})}{\tau+(\tau-1)\gamma_{3}P_{B}(\boldsymbol{\epsilon})}\\
P_{B}(\boldsymbol{\epsilon})=\big[\widehat{P}_{B}(\boldsymbol{\epsilon})\big]^{+}\\
P_{R}(\boldsymbol{\epsilon})=\frac{(\gamma_{2}-\gamma_{3})P_{B}(\boldsymbol{\epsilon})}{\gamma_{1}[1+\gamma_{3}P_{B}(\boldsymbol{\epsilon})]},\\
\end{cases}
\end{eqnarray}
where $\widehat{P}_{B}(\boldsymbol{\epsilon})$ is the solution of
\begin{eqnarray}
&&\frac{\omega_{B}\gamma_{2}}{\sigma\phi_{2}}\big[1+\gamma_{2}\widehat{P}_{B}(\boldsymbol{\epsilon})\big]^{-\frac{\beta_{B}+3}{3}}+\frac{\tau\omega_{A}\gamma_{1}}{\sigma\phi_{1}\big[\tau+(\tau-1)\gamma_{3}\widehat{P}_{B}(\boldsymbol{\epsilon})\big]^{2}}\nonumber\\
&&\times\bigg[1+\frac{\gamma_{1}\widehat{P}_{B}(\boldsymbol{\epsilon})}{\tau+(\tau-1)\gamma_{3}\widehat{P}_{B}(\boldsymbol{\epsilon})}\bigg]^{-\frac{\beta_{A}+3}{3}}-\frac{\lambda_{R}(\gamma_{2}-\gamma_{3})}{\gamma_{1}\big[1+\gamma_{3}\widehat{P}_{B}(\boldsymbol{\epsilon})\big]^{2}}\nonumber\\
&&-\frac{\tau\lambda_{A}}{\big[\tau+(\tau-1)\gamma_{3}\widehat{P}_{B}(\boldsymbol{\epsilon})\big]^{2}}-\lambda_{B}=0,
\end{eqnarray}
which can also be obtained by the bisection method as
(\ref{eqn:e18}).

In summary, the optimal transmission policy
$\{\boldsymbol{P}(\boldsymbol{\epsilon}),\boldsymbol{R}(\boldsymbol{\epsilon})\}$
for the three-phase two-way relaying is given by (\ref{eqn:e14}),
(\ref{eqn:e42}), (\ref{eqn:e43}) when $\gamma_{1}\leq\gamma_{3},
\gamma_{2}\leq\gamma_{3}$, by (\ref{eqn:e15}), (\ref{eqn:e31}),
(\ref{eqn:e44}) when $\gamma_{1}>\gamma_{3},
\gamma_{2}\leq\gamma_{3}$, by (\ref{eqn:e17}), (\ref{eqn:e32}),
(\ref{eqn:e33}) when $\gamma_{3}<\gamma_{1}\leq\gamma_{2}$, and by
(\ref{eqn:e19}), (\ref{eqn:e32}), (\ref{eqn:e33}) when
$\gamma_{3}<\gamma_{2}<\gamma_{1}$. The detailed derivations are
given in Appendix~\ref{app:3phase}.

\subsection{A Special Case}
As reviewed in Section~II-A, the dynamics of $\theta$ correspond to
diverse delay-QoS requirements. In particular, in our two-way relay
system model, if $\theta_{A}=\theta_{B}\rightarrow 0$, meaning the
services of the two nodes are non-real-time, then the weighted sum
of effective capacity yields the weighted ergodic capacity.

Letting $\theta_{A}=\theta_{B}\rightarrow 0$ in (\ref{eqn:e14}),
(\ref{eqn:e15}), (\ref{eqn:e17}) and (\ref{eqn:e19}), we can obtain
the optimal transmission policy for weighted ergodic capacity
maximization. For example, in region $\mathbb R_{1}$, the optimal
powers are given by
\begin{eqnarray}
\begin{cases}
P_{A}^*(\boldsymbol{\epsilon})=\bigg[\bigg(\frac{\sigma\lambda_{A}}{\omega_{A}}\bigg)^{-1}-\gamma_{3}^{-1}\bigg]^{+}\\
P_{B}^*(\boldsymbol{\epsilon})=\bigg[\bigg(\frac{\sigma\lambda_{B}}{\omega_{B}}\bigg)^{-1}-\gamma_{3}^{-1}\bigg]^{+}\\
P_{R}^*(\boldsymbol{\epsilon})=0,
\end{cases}
\end{eqnarray}
which have the standard form of water-filling. For the rest three
regions, similar results can be obtained.

\section{Optimal Policy for Two-Phase Two-way Relaying}

In this section, in accord with
$\mathcal{R}(\boldsymbol{P}(\boldsymbol{\epsilon}),\boldsymbol{\gamma})$
given in (\ref{eqn:e6}), we present the optimal transmission policy
for the two-phase two-way relaying, as well as the optimal partition
criterion in the MAC phase. Meanwhile, the problem in limiting case
when $\theta_{A}=\theta_{B}=0$ is also analyzed.

\subsection{Optimal Policy}

The rate constraints in the BC phase of this protocol can be equally
rewritten as
\begin{eqnarray}
\frac{1}{\theta_{A}}~e^{-\theta_{A}R_{A}(\boldsymbol{\epsilon})}\geq\frac{1}{\theta_{A}}~e^{-\frac{\theta_{A}}{2}C(\gamma_{2}P_{R}(\boldsymbol{\epsilon}))},\label{eqn:e20}\\
\frac{1}{\theta_{B}}~e^{-\theta_{B}R_{B}(\boldsymbol{\epsilon})}\geq\frac{1}{\theta_{B}}~e^{-\frac{\theta_{B}}{2}C(\gamma_{1}P_{R}(\boldsymbol{\epsilon}))}.\label{eqn:e21}
\end{eqnarray}
By using the Lagrange dual method \cite{Boyd}, we can involve the
two rate constraints into the objective function of \textbf{P1}.
Then, the resulting Lagrangian can be expressed as
\begin{eqnarray}\label{eqn:e22}
&&\mathscr{L}\big(\boldsymbol{P}(\boldsymbol{\epsilon}),\boldsymbol{R}(\boldsymbol{\epsilon}),\boldsymbol{\lambda},\boldsymbol{\mu}\big)\nonumber\\
&=&-\frac{\omega_{A}}{\theta_{A}}
\ln\big(\mathbb{E}_{\boldsymbol\gamma}[e^{-\theta_{A}R_{A}(\boldsymbol{\epsilon})}]\big)-\frac{\omega_{B}}{\theta_{B}}\ln\big(\mathbb{E}_{\gamma}[e^{-\theta_{B}R_{B}(\boldsymbol{\epsilon})}]\big)\nonumber\\
&&+\mathbb{E}_{\boldsymbol\gamma}\left[\frac{\mu_{A}}{\theta_{A}}\bigg(e^{-\theta_{A}R_{A}(\boldsymbol{\epsilon})}-e^{-\frac{\theta_{A}}{2}C(\gamma_{2}P_{R}(\boldsymbol{\epsilon}))}\bigg)\right]\nonumber\\
&&+\mathbb{E}_{\boldsymbol\gamma}\left[\frac{\mu_{B}}{\theta_{B}}\bigg(e^{-\theta_{B}R_{B}(\boldsymbol{\epsilon})}-e^{-\frac{\theta_{B}}{2}C(\gamma_{1}P_{R}(\boldsymbol{\epsilon}))}\bigg)\right] \nonumber\\
&&+\lambda_{A}\big(\overline{P_{A}}-\mathbb{E}_{\boldsymbol\gamma}[P_{A}(\boldsymbol{\epsilon})]\big)+\lambda_{B}\big(\overline{P_{B}}-\mathbb{E}_{\boldsymbol\gamma}[P_{B}(\boldsymbol{\epsilon})]\big)\nonumber\\
&&+\lambda_{R}\big(\overline{P_{R}}-\mathbb{E}_{\boldsymbol\gamma}[P_{R}(\boldsymbol{\epsilon})]\big),
\end{eqnarray}
where $\boldsymbol\mu=[\mu_{A},\mu_{B}]^T$ are the Lagrange
multipliers associated with the rate constraints on
$R_{A}(\boldsymbol{\epsilon})$ and $R_{B}(\boldsymbol{\epsilon})$ in
(\ref{eqn:e20}), (\ref{eqn:e21}),
$\boldsymbol{\lambda}=[\lambda_A,\lambda_B,\lambda_R]^T$ are the
Lagrange multipliers related to the power constraints. As a result,
the dual problem can be stated as
\begin{eqnarray}
\textbf{P3}:~&&\min_{\boldsymbol{\lambda}\succeq0,\boldsymbol{\mu}\succeq0}~\max_{\begin{subarray}{c}\boldsymbol{P}(\boldsymbol{\epsilon})\succeq
0,\\\boldsymbol{R}(\boldsymbol{\epsilon})\in
\mathcal{C}_{MAC}\end{subarray}}~\mathscr{L}\big(\boldsymbol{P}(\boldsymbol{\epsilon}),\boldsymbol{R}(\boldsymbol{\epsilon}),\boldsymbol{\lambda},\boldsymbol{\mu}\big).\nonumber
\end{eqnarray}

Taking a close look at the above dual function, we find that the
optimization of the relay power policy
$P_{R}(\boldsymbol{\epsilon})$ can be decoupled from others.
Therefore, the dual function can be computed by solving the two
subproblems as follows,
\begin{eqnarray}
\hspace{-0.2cm}{\rm
subproblem~1}:\nonumber\\
&&\hspace{-2.8cm}\max_{\begin{subarray}{c}P_{A}(\boldsymbol{\epsilon})\geq 0,\\P_{B}(\boldsymbol{\epsilon})\geq 0,\\R_{A}(\boldsymbol{\epsilon}),R_{B}(\boldsymbol{\epsilon})\end{subarray}}~-\frac{\omega_{A}}{\theta_{A}}\ln\big(\mathbb{E}_{\boldsymbol\gamma}[e^{-\theta_{A}R_{A}(\boldsymbol{\epsilon})}]\big)\nonumber\\
&&\hspace{-0.9cm}-\frac{\omega_{B}}{\theta_{B}}\ln\big(\mathbb{E}_{\gamma}[e^{-\theta_{B}R_{B}(\boldsymbol{\epsilon})}]\big)\nonumber\\
&&\hspace{-0.9cm}+\mathbb{E}_{\boldsymbol\gamma}\left[\frac{\mu_{A}}{\theta_{A}}e^{-\theta_{A}R_{A}(\boldsymbol{\epsilon})}\right]+\mathbb{E}_{\boldsymbol\gamma}\left[\frac{\mu_{B}}{\theta_{B}}e^{-\theta_{B}R_{B}(\boldsymbol{\epsilon})}\right] \nonumber\\
&&\hspace{-0.9cm}+\lambda_{A}\big(\overline{P_{A}}-\mathbb{E}_{\boldsymbol\gamma}[P_{A}(\boldsymbol{\epsilon})]\big)\nonumber\\
&&\hspace{-0.9cm}+\lambda_{B}\big(\overline{P_{B}}-\mathbb{E}_{\boldsymbol\gamma}[P_{B}(\boldsymbol{\epsilon})]\big)\label{eqn:e23}\\
&&\hspace{-2.8cm}~~~~s.t.~~~~~~\big(R_{A}(\boldsymbol{\epsilon}),R_{B}(\boldsymbol{\epsilon})\big)\in
\mathcal{C}_{MAC}\big(P_{A}(\boldsymbol{\epsilon}),P_{B}(\boldsymbol{\epsilon}),\boldsymbol{\gamma}\big),
\end{eqnarray}
and
\begin{eqnarray}
&&{\rm subproblem~2}:\nonumber\\
&&\max_{P_{R}(\boldsymbol{\epsilon})\geq 0}~
-\mathbb{E}_{\boldsymbol\gamma}\bigg[\frac{\mu_{A}}{\theta_{A}}\big(1+\gamma_{2}P_{R}(\boldsymbol{\epsilon})\big)^{-\frac{\beta_{A}}{2}}\bigg]\nonumber\\
&&~~~~~~~~~~-\mathbb{E}_{\boldsymbol\gamma}\bigg[\frac{\mu_{B}}{\theta_{B}}\big(1+\gamma_{1}P_{R}(\boldsymbol{\epsilon})\big)^{-\frac{\beta_{B}}{2}}\bigg]\nonumber\\
&&~~~~~~~~~~+\lambda_{R}\big(\overline{P_{R}}-\mathbb{E}_{\boldsymbol\gamma}[P_{R}(\boldsymbol{\epsilon})]\big),\label{eqn:e24}
\end{eqnarray}
where $\beta_{i}=\frac{\theta_{i}}{\ln 2}, i \in \{A,B\}$, is the
same as the definition in Section IV-A. Then the dual problem
\textbf{P3} can be solved through two nested dual searching loops.
The inner loop searches $\boldsymbol\mu$ for given
$\boldsymbol\lambda$ and the outer loop searches
$\boldsymbol\lambda$ for given $\boldsymbol\mu$, where the values
$\boldsymbol\mu$ and $\boldsymbol\lambda$ can be updated iteratively
using the subgradient method with guaranteed convergence as
\begin{eqnarray}
\mu_{A}^{(i+1)}=\mu_{A}^{(i)}-z_{A}^{(i)}\left(e^{-\theta_{A}R_{A}(\boldsymbol{\epsilon})}-e^{-\frac{\theta_{A}}{2}C(\gamma_{2}P_{R}(\boldsymbol{\epsilon}))}\right),\label{eqn:e27}\\
\mu_{B}^{(i+1)}=\mu_{B}^{(i)}-z_{B}^{(i)}\left(e^{-\theta_{B}R_{B}(\boldsymbol{\epsilon})}-e^{-\frac{\theta_{B}}{2}C(\gamma_{1}P_{R}(\boldsymbol{\epsilon}))}\right),\label{eqn:e28}
\end{eqnarray}
\begin{eqnarray}\label{eqn:e29}
\boldsymbol{\lambda}^{(i+1)}=\boldsymbol{\lambda}^{(i)}-\boldsymbol{s}^{(i)}\big(\boldsymbol{\overline{P}}-\mathbb{E}_{\boldsymbol\gamma}[\boldsymbol{P}(\boldsymbol{\epsilon})]\big),
\end{eqnarray}
where the subscript $i$ denotes the iteration index, and
$z_{A}^{(i)}$, $z_{B}^{(i)}$ and $\boldsymbol{s}^{(i)}$ are the step
sizes designed properly. The overall algorithm is specified later.
In the following, we solve the two subproblems respectively.

\subsubsection{Solution of Subproblem 1}
The first subproblem is relevant to $P_{A}(\boldsymbol{\epsilon})$
and $P_{B}(\boldsymbol{\epsilon})$ in the MAC phase but not
$P_{R}(\boldsymbol{\epsilon})$ in the BC phase. It is essentially a
classical resource allocation problem in MAC channels, though the
objective function is slightly different from the Gaussian MAC
\cite{Tse}. As shown in \cite{Tse}, \emph{successive decoding} is
the optimal strategy for resource allocation in MAC channels.
Motivated by this result, we partition the channel states into two
regions, $\mathbb R'_{1}$ and $\mathbb R'_{2}$, for which an example
is shown in Fig.~\ref{fig:csregion}(b). In region $\mathbb R'_{1}$,
the relay first decodes the signal from node $A$, then subtracts
this decoded signal from the received signal, and then decodes the
signal from node $B$. Inversely, in region $\mathbb R'_{2}$, the
relay decodes the signal from node $B$ first and then the signal
from node $A$. Similar partition method is used in the literature
(e.g., \cite{Miller,Qiao}). In the following, we propose the optimal
power and rate adaptation policy for a given channel partition. The
optimal channel partition method will be derived in the next
subsection.

In region $\mathbb R'_{1}$, the maximum rates are achieved at
\cite{Tse}
\begin{eqnarray}
R_{A}(\boldsymbol{\epsilon})&=&\frac{1}{2}C\bigg(\frac{\gamma_{1}P_{A}(\boldsymbol{\epsilon})}{1+\gamma_{2}P_{B}(\boldsymbol{\epsilon})}\bigg),\label{eqn:e45}\\
R_{B}(\boldsymbol{\epsilon})&=&\frac{1}{2}C\big(\gamma_{2}P_{B}(\boldsymbol{\epsilon})\big),\label{eqn:e46}
\end{eqnarray}
while in $\mathbb R'_{2}$, the maximum rates are achieved at
\begin{eqnarray}
R_{A}(\boldsymbol{\epsilon})&=&\frac{1}{2}C\big(\gamma_{1}P_{A}(\boldsymbol{\epsilon})\big),\label{eqn:e47}\\
R_{B}(\boldsymbol{\epsilon})&=&\frac{1}{2}C\bigg(\frac{\gamma_{2}P_{B}(\boldsymbol{\epsilon})}{1+\gamma_{1}P_{A}(\boldsymbol{\epsilon})}\bigg).\label{eqn:e48}
\end{eqnarray}

By applying the Karush-Kuhn-Tucker (KKT) conditions \cite{Boyd}, the
optimal power adaptation policy for $P_{A}(\boldsymbol{\epsilon})$
and $P_{B}(\boldsymbol{\epsilon})$ in $\mathbb R'_{1}$ must satisfy
the following conditions (the derivation is provided in
Appendix~\ref{app:power}):
\begin{eqnarray}\label{eqn:pa}
\begin{cases}
P_{A}(\boldsymbol{\epsilon})=\bigg[\alpha_{1}^{-\frac{2}{\beta_{A}+2}}\bigg(\frac{1+\gamma_{2}P_{B}(\boldsymbol{\epsilon})}{\gamma_{1}}\bigg)^{\frac{\beta_{A}}{\beta_{A}+2}}-\frac{1+\gamma_{2}P_{B}(\boldsymbol{\epsilon})}{\gamma_{1}}\bigg]^{+}\\
P_{B}(\boldsymbol{\epsilon})=\big[\widehat{P}_{B}(\boldsymbol{\epsilon})\big]^{+},
\end{cases}
\end{eqnarray}
where $\widehat{P}_{B}(\boldsymbol{\epsilon})$ can be obtained using
a numerical search\footnote{The outline of this numerical search is:
First, find out the stationary point based on the derivative of the
function. Then, determine the interval where the zero point exists,
and search for this point using the bisection method.} through the
following equation
\begin{eqnarray}\label{eqn:pb}
&&\hspace{-0.3cm}\frac{\gamma_{2}}{\alpha_{2}}\big[1+\gamma_{2}\widehat{P}_{B}(\boldsymbol{\epsilon})\big]^{-\frac{\beta_{B}+2}{2}}+\frac{\lambda_{A}\gamma_{2}}{\lambda_{B}\gamma_{1}}\nonumber\\
&&\hspace{-0.3cm}-\frac{\lambda_{A}\gamma_{2}}{\lambda_{B}\gamma_{1}}\bigg(\frac{\alpha_{1}}{\gamma_{1}}\bigg)^{-\frac{2}{\beta_{A}+2}}\big[1+\gamma_{2}\widehat{P}_{B}(\boldsymbol{\epsilon})\big]^{-\frac{2}{\beta_{A}+2}}-1=0,
\end{eqnarray}
with
\begin{equation}
\alpha_{1}=\frac{\delta\lambda_{A}}{\omega_{A}{\phi_{1}'}^{-1}\nonumber
-\mu_{A}},~~~~
\alpha_{2}=\frac{\delta\lambda_{B}}{\omega_{B}{\phi_{2}'}^{-1}\nonumber
-\mu_{B}},
\end{equation}
\begin{eqnarray}
\phi'_{1}&=&\int_{\boldsymbol{\gamma}\in \mathbb
R'_{1}}\left[1+\frac{\gamma_{1}P_{A}(\boldsymbol{\epsilon})}{1+\gamma_{2}P_{B}(\boldsymbol{\epsilon})}\right]^{-\frac{\beta_{A}}{2}}\mathit
p_{\boldsymbol{\gamma}}(\gamma_{1},\gamma_{2})d\gamma_{1}d\gamma_{2}\nonumber\\
&&+\int_{\boldsymbol{\gamma}\in \mathbb
R'_{2}}\big[1+\gamma_{1}P_{A}(\boldsymbol{\epsilon})\big]^{-\frac{\beta_{A}}{2}}\mathit
p_{\boldsymbol{\gamma}}(\gamma_{1},\gamma_{2})d\gamma_{1}d\gamma_{2},\nonumber
\end{eqnarray}
\begin{eqnarray}
&&\hspace{-0.4cm}\phi'_{2}=\int_{\boldsymbol{\gamma}\in \mathbb
R'_{1}}\big[1+\gamma_{2}P_{B}(\boldsymbol{\epsilon})\big]^{-\frac{\beta_{B}}{2}}\mathit
p_{\boldsymbol{\gamma}}(\gamma_{1},\gamma_{2})d\gamma_{1}d\gamma_{2}\nonumber\\
&&+\int_{\boldsymbol{\gamma}\in \mathbb
R'_{2}}\left[1+\frac{\gamma_{2}P_{B}(\boldsymbol{\epsilon})}{1+\gamma_{1}P_{A}(\boldsymbol{\epsilon})}\right]^{-\frac{\beta_{B}}{2}}\mathit
p_{\boldsymbol{\gamma}}(\gamma_{1},\gamma_{2})d\gamma_{1}d\gamma_{2},\nonumber
\end{eqnarray}
where $\delta=2\ln 2$. Like the three-phase two-way relaying,
$\phi_1'$ and $\phi_2'$ are also updated with $\boldsymbol\lambda$
in each iteration.

Using the similar method, the optimal power allocation in region
$\mathbb R'_{2}$ can be obtained and the details are omitted.

\subsubsection{Solution of Subproblem 2}
The second subproblem is only relevant to
$P_{R}(\boldsymbol{\epsilon})$ in the BC phase, which is also a
convex problem. By applying the KKT conditions, we can obtain the
optimal power allocation
$P_{R}(\boldsymbol{\epsilon})=\big[\widehat{P}_{R}(\boldsymbol{\epsilon})\big]^+$,
where $\widehat{P}_{R}(\boldsymbol{\epsilon})$ must satisfy the
following equation
\begin{eqnarray}\label{eqn:pr}
&&\mu_{A}\gamma_{2}\big[1+\gamma_{2}\widehat{P}_{R}(\boldsymbol{\epsilon})\big]^{-\frac{\beta_{A}+2}{2}}\nonumber\\
&&+\mu_{B}\gamma_{1}\big[1+\gamma_{1}\widehat{P}_{R}(\boldsymbol{\epsilon})\big]^{-\frac{\beta_{B}+2}{2}}-\delta\lambda_{R}=0.
\end{eqnarray}
We apply the bisection method to obtain
$\widehat{P}_{R}(\boldsymbol{\epsilon})$ since (\ref{eqn:pr}) is a
monotonically decreasing function of
$\widehat{P}_{R}(\boldsymbol{\epsilon})$.

\subsection{Optimal Partition Criterion}

In the above subsection, we have demonstrated the optimal
transmission policy for given decoding region $\mathbb R'=(\mathbb
R'_{1},\mathbb R'_{2})$ in the MAC phase. Here we present the
optimal partition criterion to determine the decoding order in the
MAC phase based on the obtained transmission policy.
%Since the optimal solution for the problem is unique, the convergence is guaranteed.

To maximize the weighted sum effective capacity in (\ref{eqn:e23}),
the optimal decoding order in the MAC phase should be dynamic with
respect to different channel state information. Similar to
\cite{Miller,Qiao}, finding such optimal partition can be viewed as
finding an optimal threshold, $\gamma_{1}^{th}$ for $\gamma_{1}$ or
$\gamma_{2}^{th}$ for $\gamma_{2}$. Here $\gamma_{2}^{th}$ (or
$\gamma_{1}^{th}$) is a function of both $\gamma_{1}$ (or
$\gamma_{2}$) and the power allocation policy
$\boldsymbol{P}(\boldsymbol{\epsilon})$. The following proposition
is established to find the optimal threshold.

\begin{proposition}[Optimal Channel Partition Criterion]
When $\gamma_{2}^{th}$ is used to partition $\mathbb R'$,
$\boldsymbol\gamma$ falls into region $\mathbb R'_{1}$ if
$\gamma_2<\gamma_2^{th}$, otherwise $\boldsymbol\gamma$ falls into
region $\mathbb R'_{2}$, where $\gamma_{2}^{th}$ must
satisfy\footnote{We use the similar numerical method as described
before.}
\begin{equation}\label{eqn:e25}
\frac{\big[1+\frac{\gamma_{1}P_{A}(\boldsymbol{\epsilon})}{1+\gamma_{2}^{th}P_{B}(\boldsymbol{\epsilon})}\big]^{-\frac{\beta_{A}}{2}}-\big[1+\gamma_{1}P_{A}(\boldsymbol{\epsilon})\big]^{-\frac{\beta_{A}}{2}}}
{\big[1+\frac{\gamma_{2}^{th}P_{B}(\boldsymbol{\epsilon})}{1+\gamma_{1}P_{A}(\boldsymbol{\epsilon})}\big]^{-\frac{\beta_{B}}{2}}-\big[1+\gamma_{2}^{th}P_{B}(\boldsymbol{\epsilon})\big]^{-\frac{\beta_{B}}{2}}}=K.
\end{equation}
When $\gamma_{1}^{th}$ is used to partition $\mathbb R'$,
$\boldsymbol\gamma$ falls into region $\mathbb R'_{1}$ if
$\gamma_1>\gamma_1^{th}$, otherwise $\boldsymbol\gamma$ falls into
region $\mathbb R'_{2}$, where $\gamma_{1}^{th}$ must satisfy
\begin{equation}\label{eqn:e26}
\frac{\big[1+\frac{\gamma_{1}^{th}P_{A}(\boldsymbol{\epsilon})}{1+\gamma_{2}P_{B}(\boldsymbol{\epsilon})}\big]^{-\frac{\beta_{A}}{2}}-\big[1+\gamma_{1}^{th}P_{A}(\boldsymbol{\epsilon})\big]^{-\frac{\beta_{A}}{2}}}
{\big[1+\frac{\gamma_{2}P_{B}(\boldsymbol{\epsilon})}{1+\gamma_{1}^{th}P_{A}(\boldsymbol{\epsilon})}\big]^{-\frac{\beta_{B}}{2}}-\big[1+\gamma_{2}P_{B}(\boldsymbol{\epsilon})\big]^{-\frac{\beta_{B}}{2}}}=K.
\end{equation}
In both (\ref{eqn:e25}) and (\ref{eqn:e26}), $K$ is defined as
\begin{equation}
K\triangleq\frac{\theta_{A}(\omega_{B}{\phi'_{2}}^{-1}-\mu_{B})}
{\theta_{B}(\omega_{A}{\phi'_{1}}^{-1}-\mu_{A})}.\nonumber
\end{equation}
\end{proposition}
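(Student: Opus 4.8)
The plan is to exploit that, for a fixed power allocation policy $\boldsymbol{P}(\boldsymbol{\epsilon})$, the choice of decoding order at each channel state decouples across states, so that determining the optimal partition reduces to a pointwise comparison whose crossover is the desired threshold. First I would observe that in the subproblem-1 objective~(\ref{eqn:e23}) the only terms affected by the decoding order chosen at a given $\boldsymbol{\gamma}$ are the rate-dependent terms $-\frac{\omega_{A}}{\theta_{A}}\ln(\mathbb{E}_{\boldsymbol\gamma}[e^{-\theta_{A}R_{A}}])+\mathbb{E}_{\boldsymbol\gamma}[\frac{\mu_{A}}{\theta_{A}}e^{-\theta_{A}R_{A}}]$ and the analogous pair in $R_{B}$; the power-cost terms $\lambda_{A}P_{A}+\lambda_{B}P_{B}$ depend only on $P_{A},P_{B}$, which are held fixed by the given policy, and are therefore irrelevant to the comparison. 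Since a single boundary state carries zero probability mass, reassigning a vanishing set of states near the partition boundary changes $\phi'_{1}=\mathbb{E}_{\boldsymbol\gamma}[e^{-\theta_{A}R_{A}}]$ and $\phi'_{2}=\mathbb{E}_{\boldsymbol\gamma}[e^{-\theta_{B}R_{B}}]$ only infinitesimally, so the effect on the objective is the first-order variation of the $\ln(\cdot)$ terms; it is governed by the marginal prices $a\triangleq\frac{1}{\theta_{A}}(\omega_{A}{\phi'_{1}}^{-1}-\mu_{A})$ and $b\triangleq\frac{1}{\theta_{B}}(\omega_{B}{\phi'_{2}}^{-1}-\mu_{B})$, both positive at the optimum (the same positivity that makes $\alpha_{1},\alpha_{2}$ well defined). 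Hence, at a given state, decoding order $\mathbb R'_{1}$ beats $\mathbb R'_{2}$ iff $-a\,e^{-\theta_{A}R_{A}^{(1)}}-b\,e^{-\theta_{B}R_{B}^{(1)}}\ge-a\,e^{-\theta_{A}R_{A}^{(2)}}-b\,e^{-\theta_{B}R_{B}^{(2)}}$, where the superscript $(i)$ denotes the rate expressions of region $\mathbb R'_{i}$ from~(\ref{eqn:e45})--(\ref{eqn:e48}).

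Rearranging, region $\mathbb R'_{1}$ is optimal exactly when $\frac{e^{-\theta_{A}R_{A}^{(1)}}-e^{-\theta_{A}R_{A}^{(2)}}}{e^{-\theta_{B}R_{B}^{(2)}}-e^{-\theta_{B}R_{B}^{(1)}}}\le\frac{b}{a}$, and I would check by inspection of the definitions that $\frac{b}{a}=K$. Substituting $C(x)=\log_{2}(1+x)$ and $\beta_{i}=\theta_{i}/\ln 2$ turns $e^{-\theta_{A}R_{A}^{(1)}}$ into $[1+\frac{\gamma_{1}P_{A}}{1+\gamma_{2}P_{B}}]^{-\beta_{A}/2}$, $e^{-\theta_{A}R_{A}^{(2)}}$ into $[1+\gamma_{1}P_{A}]^{-\beta_{A}/2}$, and symmetrically for $R_{B}$ with exponent $\beta_{B}/2$; since $R_{A}^{(1)}\le R_{A}^{(2)}$ and $R_{B}^{(1)}\ge R_{B}^{(2)}$, both numerator and denominator are nonnegative, and the ratio is precisely the left-hand side of~(\ref{eqn:e25}). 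At a boundary state the two orders give equal marginal objective, so the threshold is the point where this ratio equals $K$, which is~(\ref{eqn:e25}); writing the same condition with $\gamma_{1}$ as the free variable and $\gamma_{2}$ fixed reproduces~(\ref{eqn:e26}).

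It remains to obtain the claimed sidedness (region $\mathbb R'_{1}$ when $\gamma_{2}<\gamma_{2}^{th}$, and when $\gamma_{1}>\gamma_{1}^{th}$). For this I would establish that the left-hand side of~(\ref{eqn:e25}) is monotonically increasing in $\gamma_{2}$ for fixed $\gamma_{1}$ --- for instance it tends to $\frac{\beta_{A}}{\beta_{B}}(1+\gamma_{1}P_{A})^{-\beta_{A}/2}$ as $\gamma_{2}\to 0$ and diverges as $\gamma_{2}\to\infty$ --- while the left-hand side of~(\ref{eqn:e26}) is monotonically decreasing in $\gamma_{1}$ for fixed $\gamma_{2}$. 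Combined with ``choose $\mathbb R'_{1}$ iff the ratio is $\le K$,'' monotonicity immediately yields both threshold rules. I expect this last step to be the main obstacle: it requires differentiating a ratio of two differences of the form $[1+\cdot]^{-\beta/2}$ with respect to the relevant gain and signing the result, which is somewhat delicate; by contrast, the pointwise-reduction argument is routine once one notes that each boundary state has measure zero, so only the first-order variation of the $\ln(\cdot)$ objective enters.
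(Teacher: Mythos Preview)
Your argument is correct and lands on exactly the same threshold equations as the paper, but the route is framed differently. The paper's proof (Appendix~\ref{app:partition}) parametrizes the partition boundary as a curve $\gamma_{2}^{th}=f^{*}(\gamma_{1})$, perturbs it to $f^{*}(\gamma_{1})+sg(\gamma_{1})$, and applies a formal calculus-of-variations step: it differentiates the functional $\mathcal{J}(f)$ with respect to $s$ at $s=0$ using the Leibniz rule on the integration limits, and then uses the arbitrariness of $g(\gamma_{1})$ to force the integrand to vanish. Your approach instead linearizes the $\ln(\cdot)$ at the current operating point, observes that this makes the decoding-order choice separable across channel states, and compares the two orders pointwise; the threshold is then the indifference point. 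Both are simply the first-order optimality condition at the boundary, so the outputs coincide.

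What each buys: the paper's variational derivation is mechanical and requires no probabilistic reasoning about measure-zero sets, but it tacitly \emph{assumes} a threshold-type partition (by writing $\gamma_{2}^{th}=f(\gamma_{1})$ from the outset) and, notably, derives only the equality condition (\ref{eqn:e25})--(\ref{eqn:e26}); it never justifies the sidedness statements ``$\gamma_{2}<\gamma_{2}^{th}\Rightarrow\mathbb{R}'_{1}$'' and ``$\gamma_{1}>\gamma_{1}^{th}\Rightarrow\mathbb{R}'_{1}$''. Your pointwise comparison is more elementary, does not presuppose a threshold structure, and naturally yields an inequality (``choose $\mathbb{R}'_{1}$ iff the ratio is $\le K$'') from which the sidedness would follow once the monotonicity you flag is established. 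So on that last point you are actually going beyond what the paper proves; the monotonicity check you identify as the delicate step is indeed not addressed in the paper at all.
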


\begin{proof}
Please see Appendix~\ref{app:partition}.
\end{proof}

Particularly, $\gamma_1^{th}$ and $\gamma_2^{th}$ should be
well-defined in the partition criterion, meaning that they should be
positive. However, as we know from (\ref{eqn:e25}) and
(\ref{eqn:e26}), they cannot be both positive in some condition. In
this case, the obtained positive one is chosen for the partition.

Finally, we describe the overall algorithm in Algorithm~1 to find
the optimal power and rate adaptation policy for the two-phase
two-way relay protocol. Note that in Algorithm 1, for a given
channel partition we can obtain the optimal power allocation, then
the optimal power allocation in turn leads to an optimal channel
partition. Due to the convexity of the problem, the global
convergence and optimality can be guaranteed.

\subsection{A Special Case}

Similar to the three-phase scheme, when
$\theta_{A}=\theta_{B}\rightarrow 0$,  we can obtain the optimal
transmission policy for the two-phase two-way relaying without delay
requirements (i.e., the ergodic capacity).

\begin{proposition}
The optimal power allocation policy for the two-phase two-way DF
relaying for weighted ergodic capacity maximization when
$\xi_{A}<\xi_{B}$ is determined by
\begin{equation}\label{eqn:e49}
P_{A}^*(\boldsymbol{\epsilon})=\bigg[\frac{\xi_{A}}{\delta
\lambda_{A}}-\frac{\xi_{B}-\xi_{A}}{\delta\gamma_{1}
(\frac{\lambda_{B}}{\gamma_{2}}-\frac{\lambda_{A}}{\gamma_{1}})}\bigg]^{+},
\end{equation}
\begin{equation}\label{eqn:e50}
P_{B}^*(\boldsymbol{\epsilon})=\bigg[\frac{\xi_{B}-\xi_{A}}{\delta\gamma_{2}
(\frac{\lambda_{B}}{\gamma_{2}}-\frac{\lambda_{A}}{\gamma_{1}})}-\frac{1}{\gamma_{2}}\bigg]^{+},
\end{equation}
\begin{eqnarray}\label{eqn:e51}
P_{R}^*(\boldsymbol{\epsilon})=\begin{cases} 0, &
\lambda_{R}\geq\frac{\mu_{A}\gamma_{1}+\mu_{B}\gamma_{2}}{\delta}\\
\frac{-c_{2}+ \sqrt{{c_{2}^{2}-4c_{1}c_{3}}}}{2c_{1}}, & {\rm
otherwise}
\end{cases}
\end{eqnarray}
where $\xi_{A}=\omega_{A}-\mu_{A}$, $\xi_{B}=\omega_{B}-\mu_{B}$,
and $c_{1}=\lambda_{R}\gamma_{1}\gamma_{2}$,
$c_{2}=\lambda_{R}(\gamma_{1}+\gamma_{2})-\gamma_{1}\gamma_{2}(\mu_{A}+\mu_{B})/\delta$,
$c_{3}=\lambda_{R}-(\mu_{A}\gamma_{1}+\mu_{B}\gamma_{2})/\delta$.
The results for $\xi_{A}\geq\xi_{B}$ can be easily obtained using
the same methods.
\end{proposition}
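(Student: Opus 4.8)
The plan is to obtain this proposition as the $\theta_A=\theta_B\to 0$ limit of the general two-phase policy derived in Section~V-A, equivalently by re-running the KKT analysis of \textbf{P3} with the effective-capacity objective replaced by its limit. As $\theta_i\to 0$ one has $\beta_i=\theta_i/\ln 2\to 0$, $-\tfrac{1}{\theta_i}\ln\mathbb{E}_{\boldsymbol\gamma}[e^{-\theta_i R_i}]\to\mathbb{E}_{\boldsymbol\gamma}[R_i]$, $\phi'_i\to 1$, $e^{-\theta_i R_i}\to 1$, and $\tfrac{\mu_A}{\theta_A}\big(e^{-\theta_A R_A}-e^{-\frac{\theta_A}{2}C(\gamma_2 P_R)}\big)\to\mu_A\big(\tfrac12 C(\gamma_2 P_R)-R_A\big)$, and likewise for the $\mu_B$ term. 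Hence the Lagrangian (\ref{eqn:e22}) converges, for each fixed policy, to the weighted-ergodic Lagrangian whose per-state integrand is $\xi_A R_A+\xi_B R_B+\tfrac{\mu_A}{2}C(\gamma_2 P_R)+\tfrac{\mu_B}{2}C(\gamma_1 P_R)-\lambda_A P_A-\lambda_B P_B-\lambda_R P_R$ up to additive constants, with $\xi_A=\omega_A-\mu_A$ and $\xi_B=\omega_B-\mu_B$. Since $P_R$ still appears in no other term, the decoupling into subproblem~1 (the MAC variables) and subproblem~2 (the relay power) is preserved, so I would treat the two separately.

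For subproblem~1, with the powers fixed, maximizing $\xi_A R_A+\xi_B R_B$ over the MAC pentagon $\mathcal{C}_{MAC}$ is a linear program over a polymatroid, so an optimal rate pair is a vertex; an exchange argument on the active sum-rate facet shows that when $\xi_A<\xi_B$ the maximizing vertex is the SIC corner $R_A=\tfrac12 C\big(\gamma_1 P_A/(1+\gamma_2 P_B)\big)$, $R_B=\tfrac12 C(\gamma_2 P_B)$, i.e.\ region $\mathbb R'_1$ of (\ref{eqn:e45})--(\ref{eqn:e46}). Substituting this and using the identity $C\big(\gamma_1 P_A/(1+\gamma_2 P_B)\big)=C(\gamma_1 P_A+\gamma_2 P_B)-C(\gamma_2 P_B)$, the per-state objective becomes $\tfrac{\xi_A}{2}C(\gamma_1 P_A+\gamma_2 P_B)+\tfrac{\xi_B-\xi_A}{2}C(\gamma_2 P_B)-\lambda_A P_A-\lambda_B P_B$, which is jointly concave in $(P_A,P_B)$ when $\xi_A<\xi_B$. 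Its stationarity equations are linear in $(1+\gamma_1 P_A+\gamma_2 P_B)^{-1}$ and $(1+\gamma_2 P_B)^{-1}$: the $P_A$-equation yields $1+\gamma_1 P_A+\gamma_2 P_B=\xi_A\gamma_1/(\delta\lambda_A)$, which substituted into the $P_B$-equation yields $1+\gamma_2 P_B=(\xi_B-\xi_A)\big/\big[\delta(\lambda_B/\gamma_2-\lambda_A/\gamma_1)\big]$; solving these two for $P_A$ and $P_B$ and attaching the $[\,\cdot\,]^{+}$ that encodes the nonnegativity multipliers gives exactly (\ref{eqn:e49})--(\ref{eqn:e50}).

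For subproblem~2, the limiting objective $\tfrac{\mu_A}{2}C(\gamma_2 P_R)+\tfrac{\mu_B}{2}C(\gamma_1 P_R)-\lambda_R P_R$ is concave in $P_R\ge 0$, so the KKT optimum is $P_R^*=0$ exactly when its one-sided derivative at the origin is non-positive, i.e.\ when $\lambda_R\ge(\mu_A\gamma_1+\mu_B\gamma_2)/\delta$; otherwise the interior stationary point solves $\mu_A\gamma_2/(1+\gamma_2 P_R)+\mu_B\gamma_1/(1+\gamma_1 P_R)=\delta\lambda_R$, and clearing the two denominators turns this into the quadratic $c_1 P_R^2+c_2 P_R+c_3=0$ with the stated $c_1,c_2,c_3$; concavity together with $P_R\ge 0$ selects the larger root $\big(-c_2+\sqrt{c_2^2-4c_1c_3}\big)/(2c_1)$, giving (\ref{eqn:e51}). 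The complementary case $\xi_A\ge\xi_B$ uses the other SIC corner $\mathbb R'_2$ of (\ref{eqn:e47})--(\ref{eqn:e48}) with the identical computation with $A$ and $B$ swapped.

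I expect the main obstacle to lie in the first step of subproblem~1: rigorously identifying the optimal pentagon vertex as a function of $\mathrm{sgn}(\xi_B-\xi_A)$ and confirming that the resulting KKT point is the global maximizer — this uses the concavity that $\xi_A<\xi_B$ supplies, and in the degenerate regimes ($\xi_A\le 0$, or $\lambda_B/\gamma_2\le\lambda_A/\gamma_1$) the corresponding ``water level'' turns non-positive and one must check that the $[\,\cdot\,]^{+}$ drives that power to zero consistently — together with the routine but necessary verification that the policies, and not merely the objective value, converge as $\theta\to 0$.
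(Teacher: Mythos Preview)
Your proposal is correct and follows the same overall route as the paper: take the $\theta_A=\theta_B\to 0$ limit of the two-phase Lagrangian, keep the decoupling into the MAC and BC subproblems, and read off the closed-form powers from the resulting KKT equations. The paper executes this even more tersely by plugging $\theta\to 0$ directly into the already-derived expressions (\ref{eqn:pa}) and (\ref{eqn:pr}) rather than first rewriting the limiting Lagrangian.

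The one substantive difference is how you justify that, when $\xi_A<\xi_B$, the optimal MAC operating point is the successive-decoding corner $\mathbb R'_1$ for \emph{every} channel state. You argue this directly via the polymatroid/linear-programming structure of $\mathcal C_{MAC}$ with weight vector $(\xi_A,\xi_B)$. The paper instead invokes Proposition~1 in the limit: it computes $K=\xi_B/\xi_A>1$, reduces (\ref{eqn:e25}) to $[(1+\gamma_1P_A)/(1+\gamma_2^{th}P_B)]^{-\beta_A/2}=K$, and observes that as $\theta_A\to 0$ this forces $\gamma_2^{th}\to\infty$, so $\boldsymbol\gamma$ always lies in $\mathbb R'_1$. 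Your argument is cleaner and self-contained (it does not rely on the machinery of Proposition~1 or on taking a limit of an implicit threshold), while the paper's argument has the virtue of being a one-line corollary of results already established in Section~V-B. Either way the conclusion is identical, and the remaining algebra for $P_A^*,P_B^*,P_R^*$ is the same in both.
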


\begin{proof} Letting $\theta_{A}=\theta_{B}\rightarrow 0$ in (\ref{eqn:pa}) and (\ref{eqn:pr}),
we have
\begin{eqnarray}
\begin{cases}
P_{A}(\boldsymbol{\epsilon})=\big[\frac{\xi_{A}}{\delta\lambda_{A}}-\frac{1+\gamma_{2}P_{B}(\boldsymbol{\epsilon})}{\gamma_{1}}\big]^{+}\\
P_{B}(\boldsymbol{\epsilon})=\big[\frac{\gamma_{2}(\xi_{B}-\xi_{A})}{\delta(\lambda_{B}\gamma_{1}-\lambda_{A}\gamma_{2})}-\frac{1}{\gamma_{2}}\big]^{+}\\
\frac{\mu_{A}\gamma_{2}}{1+\gamma_{2}P_{R}(\boldsymbol{\epsilon})}+\frac{\mu_{B}\gamma_{1}}{1+\gamma_{1}P_{R}(\boldsymbol{\epsilon})}-\delta\lambda_{R}=0.\nonumber
\end{cases}
\end{eqnarray}
Through simple calculation, we can get the desired results
(\ref{eqn:e49}), (\ref{eqn:e50}) and (\ref{eqn:e51}) in
$\mathbb{R}'_1$.

For the segmentation of $(\mathbb R'_{1},\mathbb R'_{2})$ in the
case of $\theta_{A}=\theta_{B}\rightarrow 0$, by applying the
proposed optimal partition criterion, we have
\begin{equation}
K=\frac{\theta_{A}(\omega_{B}-\mu_{B})}{\theta_{B}(\omega_{A}-\mu_{A})}=\frac{\xi_{B}}{\xi_{A}}.\nonumber
\end{equation}
Under this condition, (\ref{eqn:e25}) becomes
\begin{equation}
\bigg[\frac{1+\gamma_{1}P_{A}(\boldsymbol{\epsilon})}{1+\gamma_{2}^{th}P_{B}(\boldsymbol{\epsilon})}\bigg]^{-\frac{\beta_{A}}{2}}=K.\nonumber
\end{equation}
When $\xi_{B}>\xi_{A}$, meaning that $K>1$, $\gamma_{2}^{th}>0$ is
well-defined for partition. As $\theta_{A} \rightarrow 0$, we can
find
\begin{equation}
\gamma_{2}^{th}\rightarrow\infty, \nonumber
\end{equation}
which implies that $\boldsymbol\gamma$ always falls into
$\mathbb{R}'_1$. Hence, the proof completes. Similar analysis can
also be done when $0 \leq K \leq 1$ by using $\gamma_{1}^{th}$.
\end{proof}

\begin{algorithm}[!t]
\caption{Finding the optimal transmission policy for two-phase
two-way relaying}
\begin{algorithmic}[1]
\STATE Given $\omega_{A}$, $\omega_{B}$. \STATE \textbf{Initialize}
$P_{A}(\boldsymbol{\epsilon})=\overline{P_{A}}$,
$P_{B}(\boldsymbol{\epsilon})=\overline{P_{B}}$. \STATE
\textbf{Initialize} $\boldsymbol\lambda$. \REPEAT \STATE
\textbf{Initialize} $\boldsymbol\mu$ for each $\boldsymbol\gamma$.
\REPEAT \STATE Determine the decoding order in the MAC phase
according to the optimal partition criterion given in Proposition 1.
\IF{the signal from node $A$ is decoded first, i.e.,
$\boldsymbol\gamma\in\mathbb{R}_1'$} \STATE Obtain the two source
power allocations $P_{A}(\boldsymbol{\epsilon})$ and
$P_{B}(\boldsymbol{\epsilon})$ by using (\ref{eqn:pa}). \STATE
Obtain the relay power allocation $P_{R}(\boldsymbol{\epsilon})$ by
using (\ref{eqn:pr}). \STATE Obtain the rate adaptation
$\boldsymbol{R}(\boldsymbol{\epsilon})$ as (\ref{eqn:e45}),
(\ref{eqn:e46}). \STATE Update $\boldsymbol\mu$ using the
subgradient method in (\ref{eqn:e27}), (\ref{eqn:e28}). \ELSE \STATE
$\boldsymbol\gamma\in\mathbb{R}_2'$, adopt similar operations as
above except that the rate adaptation is given in (\ref{eqn:e47}),
(\ref{eqn:e48}). \ENDIF \UNTIL{$\boldsymbol\mu$ converges.} \STATE
Update $\boldsymbol\lambda$ using the subgradient method in
(\ref{eqn:e29}). \UNTIL{$\boldsymbol\lambda$ converges.}
\end{algorithmic}
\end{algorithm}

\section{Numerical Results}

In this section, extensive numerical results are provided to
illustrate the performance of our proposed cross-layer transmission
strategies for the two-way relay systems.

As a benchmark, the conventional two-way channel with direct
transmission is considered, for which the optimal transmission
policy is obtained from \cite{Jia}. Besides, to show the advantage
of the optimal channel partition in the two-phase two-way relaying,
the transmission policy using the static weight-based channel
partition, as introduced in \cite{Tse}, is studied. Specifically, in
the MAC phase of this scheme, the relay always first decodes the
signals from the source with smaller weight, regardless of the CSI
variation. The weight-based and the proposed CSI-based schemes with
fixed power assignment are also studied to illustrate the
significance of channel-aware power adaptation. In these fixed power
assignment schemes, the instantaneous power of all transmitting
nodes is set to be a constant, while the transmission rates are
adaptive with respect to the channel fading.

In our numerical evaluation, the relay is located in a line between
the two users. We set the distance between node $A$ and $B$ as $2$.
The $A$-$R$ distance and the $B$-$R$ distance are denoted as $d$ and
$2-d$, respectively, where $0<d<2$. The log-distance path loss model
with small-scale Rayleigh fading is assumed. Hence, the network
channel information $\gamma_{1}, \gamma_{2}, \gamma_{3}$ follow
independent exponential distribution with parameter
$\lambda_{1}=d^{\nu}$, $\lambda_{2}=(2-d)^{\nu}$, and
$\lambda_{3}=2^{\nu}$, respectively, where $\nu$ denotes the path
loss exponent. A typical value of $\nu$ lies in the range of $(2,
5)$, and it is set as $4$ in our examples. The long-term power
constraints of the nodes satisfy
$\overline{P_{A}}=\overline{P_{B}}=\overline{P_{R}}+3{\rm dB}$.
Throughout this section, the weights are given by $\omega_{A}=0.6$
and $\omega_{B}=0.4$ for illustration purpose.

\subsection{Performance of Symmetric Relay for Two Sources}

In this subsection, we consider the case when relay is in the middle
of the two source nodes, namely, $d=1$. Hence, the channels between
the sources and the relay are symmetric.

\begin{figure}[!t]
\begin{centering}
\includegraphics[scale=.57]{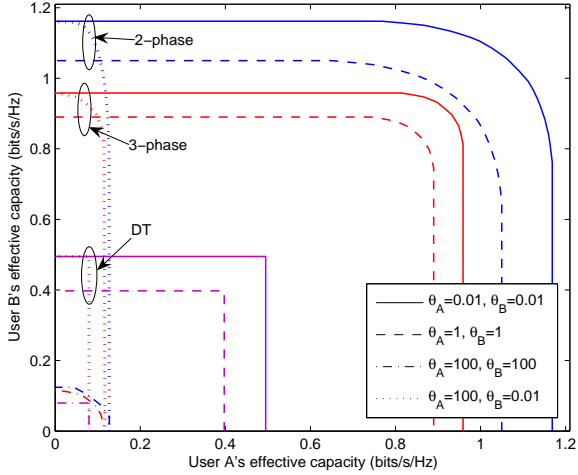}
\vspace{-0.01cm}
 \caption{Optimal effective capacity regions for different schemes under diverse delay constraints with $\overline{P_{A}}=\overline{P_{B}}=9{\rm dB},\overline{P_{R}}=6{\rm dB}$.}\label{fig:region_dif}
\end{centering}
\vspace{-0.3cm}
\end{figure}
\begin{figure}[!t]
\begin{centering}
\includegraphics[scale=.55]{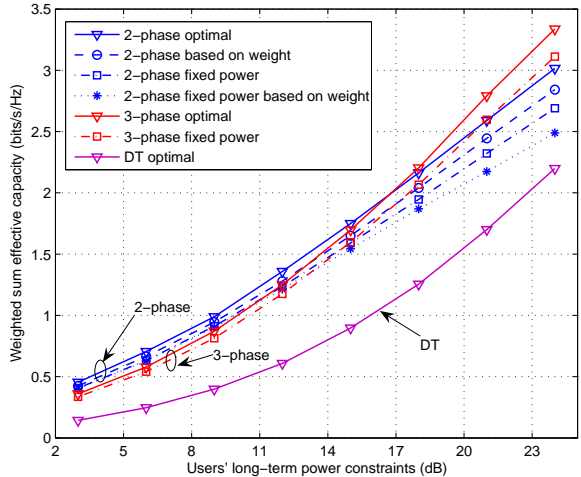}
\vspace{-0.1cm}
 \caption{Weighted sum effective capacity versus long-term power constraints under $\theta_{A}=\theta_{B}=1$.}\label{fig:power}
\end{centering}
\vspace{-0.3cm}
\end{figure}

Firstly, in Fig.~\ref{fig:region_dif}, we plot the optimal effective
capacity regions under different delay constraints for the
three-phase and two-phase two-way relaying, as well as the two-way
direct transmission. We set the long-term power constraints of the
source nodes as $9{\rm dB}$. It is observed from
Fig.~\ref{fig:region_dif} that, with the help of two-way relay, the
effective capacity region is significantly expanded compared with
the conventional two-way direct transmission. We can also find that,
if one user's delay constraint becomes stringent, its effective
capacity becomes small, so is the effective capacity region. This
suggests that there is in general a fundamental throughput-delay
tradeoff associated with the optimal resource allocation. In
addition, under these given power constraints, the effective
capacity region of the two-phase protocol is larger than that of the
three-phase protocol.

\begin{figure}[!t]
\begin{centering}
\includegraphics[scale=.52]{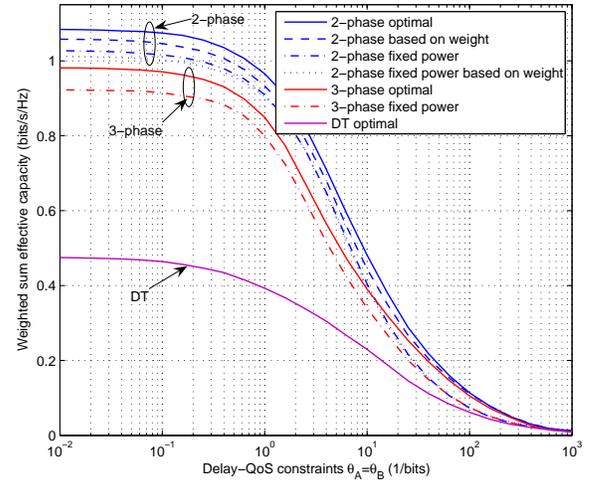}
\vspace{-0.1cm}
 \caption{Weighted sum effective capacity versus different delay constraints with long-term power constraints $\overline{P_{A}}=\overline{P_{B}}=9{\rm dB},\overline{P_{R}}=6{\rm dB}$.}\label{fig:theta}
\end{centering}
\vspace{-0.3cm}
\end{figure}
\begin{figure}[!t]
\begin{centering}
\includegraphics[scale=.63]{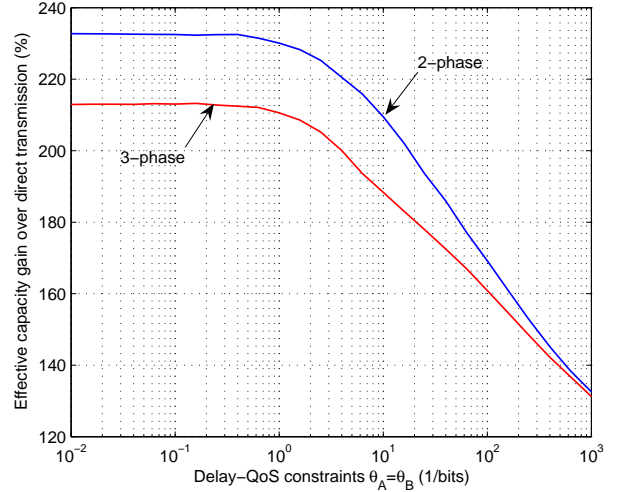}
\vspace{-0.1cm}
 \caption{Effective capacity gain of optimal policies over direct transmission with long-term power constraints $\overline{P_{A}}=\overline{P_{B}}=9{\rm dB},\overline{P_{R}}=6{\rm dB}$.}\label{fig:gain}
\end{centering}
\vspace{-0.3cm}
\end{figure}

In Fig.~\ref{fig:power}, we compare the weighted sum effective
capacity of different schemes under different long-term power
constraints, where the delay constraints are set as
$\theta_{A}=\theta_{B}=1$.
From this figure, we can observe that the two-way relaying brings
tremendous improvements on the weighted sum effective capacity for
information exchange between sources. By taking a closer look at
Fig.~\ref{fig:power}, it can be found that the power adaptation can
bring about $10\%$ and $7\%$ improvements on the effective capacity
for the two-phase and three-phase protocols, respectively, at all
the considered power constraints. For the two-phase protocol, about
$5\%$ performance improvement can be achieved by the proposed
CSI-based scheme over the weight-based scheme. Moreover, we can also
see that the weighted sum effective capacity of the two-phase
protocol is superior to that of the three-phase protocol when the
source power is below about $18{\rm dB}$, while it is inferior to
the three-phase protocol when the source power is higher. Therefore,
the three-phase scheme with power adaptation is more appropriate for
cross-layer two-way relaying under high SNR conditions.

Fig.~\ref{fig:theta} shows the weighted sum effective capacity
versus different delay-QoS constraints
$\theta=\theta_{A}=\theta_{B}$, and the effective capacity gain of
the optimal policies over direct transmission are further plotted in
Fig.~\ref{fig:gain}. Here, the power constraints for the source
nodes are fixed as $9{\rm dB}$. As presented in
Fig.~\ref{fig:theta}, the weighted sum effective capacity generally
decreases with the increasing $\theta$. We can observe from both
Fig.~\ref{fig:theta} and Fig.~\ref{fig:gain} that when the delay
constraints are loose, the optimal policies for the two-phase and
three-phase two-way relaying can achieve substantial effective
capacity gains over the direct transmission. However, the advantages
become small as the delay constraints go stringent. Particularly,
the weighted sum effective capacity of all schemes approach to zero
if $\theta$ is large enough. This is expected as, when the delay
constraints are very stringent, the system can no longer support the
transmission due to fading effect of the channel. This conclusion is
consistent with the theory of delay-limited capacity in information
theory. Moreover, it is obvious that the proposed strategies can
efficiently provide the best weighted sum effective capacity in two
different protocols. Specifically, the margins between the optimal
policies and the fixed power policies for both protocols go larger
along with the decrease of $\theta$. Besides, it has demonstrated
that, under such condition, the two-phase protocol is superior to
the three-phase protocol, though the superiority becomes small for
stringent delay constraints.

\begin{figure}[!t]
\begin{centering}
\includegraphics[scale=.55]{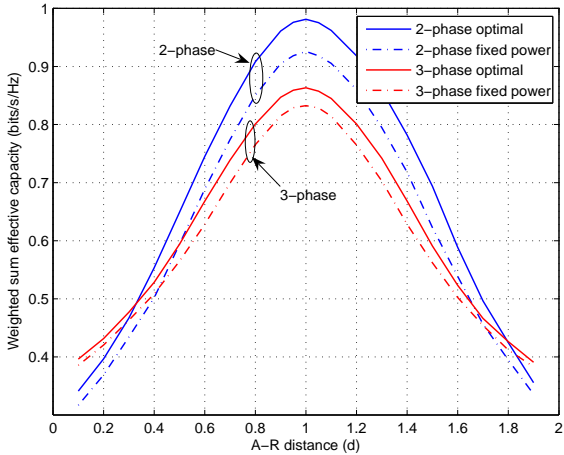}
\vspace{-0.1cm}
 \caption{Impact of relay location on weighted sum effective capacity with same delay constraint $\theta_{A}=\theta_{B}=1$.}\label{fig:d_s}
\end{centering}
\vspace{-0.3cm}
\end{figure}

\subsection{Impact of Relay Location on System Performance}

In this subsection, we will consider the impact of the relay
location in the two-way relay system. Here, we assume the long-term
power constraints of the source nodes are $9{\rm dB}$.

First, we consider the case when source nodes have the same
delay-QoS constraint that $\theta_{A}=\theta_{B}=1$. As illustrated
in Fig.~\ref{fig:d_s}, the maximal weighted sum effective capacity
is obtained when the relay is in the middle of the two source nodes
no matter what transmission strategy is adopted. Meanwhile, we can
observe that our proposed resource allocation policies can obviously
achieve effective capacity gains in both transmission schemes.
However, the benefits decrease when the relay is close to either of
the source nodes. The reason is that, as a result of the severe path
loss, the channel between the relay and the distant source becomes
the major limit of the transmission in this case. Moreover, we can
find that, on this condition, when the distance from the relay to
the source node is less than 0.3, the three-phase protocol
outperforms the two-phase protocol. Otherwise, the two-phase
protocol has advantage over the three-phase protocol.

Next, we study the situation where the two source nodes have
different delay requirements. Here, we set that $\theta_{A}=100$ and
$\theta_{B}=1$. It is interesting to find from Fig.~\ref{fig:d_dif}
that, the maximal weighted sum effective capacity for the
three-phase protocol is gained when the relay is in the middle of
the two sources, while the relay should be closer to the source with
more stringent delay requirement if the two-phase protocol is used.
Different from Fig.~\ref{fig:d_s}, when the distance between the
relay and the node with greater $\theta$ is larger than about 1.3,
the three-phase protocol can get better performance.

\begin{figure}[!t]
\begin{centering}
\includegraphics[scale=.59]{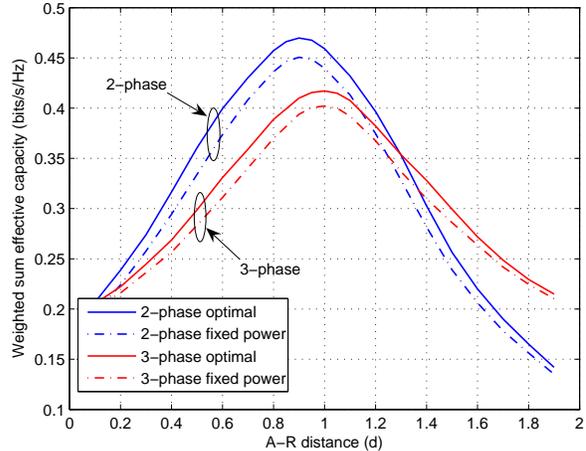}
\vspace{-0.1cm}
 \caption{Impact of relay location on weighted sum effective capacity with delay constraints $\theta_{A}=100, \theta_{B}=1$.}\label{fig:d_dif}
\end{centering}
\vspace{-0.3cm}
\end{figure}

\section{Conclusion}

This paper has studied the cross-layer optimization of two-way
relaying under statistical delay-QoS constraints. We have focused on
two main transmission protocols: three-phase transmission and
two-phase transmission. By integrating the theory of effective
capacity, the optimization problem for weighted sum throughput
maximization in physical layer and delay provisioning in datalink
layer was modeled as a long-term weighted sum effective capacity
maximization problem. Then, optimal transmission policy was proposed
for each protocol.

A few important conclusions have been made through extensive
numerical results. Firstly, our proposed two-way relaying policies
can efficiently provide delay-QoS guarantees, while there exists a
tradeoff between the throughput gain and the delay-QoS provisioning.
Secondly, the proposed policies significantly improve the system
performance compared with the schemes without power adaptation.
Especially, for the two-phase protocol, the proposed CSI-based
method for successive decoding in the MAC phase has 5-10$\%$
performance improvements compared with the weight-based successive
decoding. Thirdly, when the relay is located in the middle of the
transmission, in terms of weighted sum effective capacity, the
three-phase procotol outperforms the two-phase protocol in high SNR
regime and is inferior to the two-phase protocol in low SNR regime.
Last but not least, it is better to place the relay closer to the
source with more stringent delay constraint for the two-phase
protocol.

This work has concentrated on DF relaying with full channel
information in two-way relay systems. It can be further extended to
the heterogeneous networks consisting of delay-constrained and
non-delay-constrained traffics. Alternative relaying protocols and
transmission strategies with time adaptation or partial channel
state information are also possible for future research.

\appendices
\section{Derivation of Optimal Power Adaptation for Three-phase Two-way Relaying}\label{app:3phase}

Substituting the optimal rate assignment at each decoding region
into the Lagrangian (\ref{eqn:e30}), we have
\begin{eqnarray}\label{eqn:3t}
&&-\frac{\omega_{A}}{\theta_{A}}
\ln\bigg(\int_{\boldsymbol{\gamma}\in \mathbb R_{1},\mathbb
R_{3}}\big[1+\gamma_{3}P_{A}(\boldsymbol{\epsilon})\big]^{-\frac{\beta_{A}}{3}}\mathit
p_{\boldsymbol{\gamma}}(\gamma_{1},\gamma_{2},\gamma_{3})\nonumber\\
&&\times
d\gamma_{1}d\gamma_{2}d\gamma_{3}+\int_{\boldsymbol{\gamma}\in
\mathbb R_{2},\mathbb
R_{4}}\max\big\{\big[1+\gamma_{1}P_{A}(\boldsymbol{\epsilon})\big]^{-\frac{\beta_{A}}{3}},\nonumber\\
&&\big[1+\gamma_{3}P_{A}(\boldsymbol{\epsilon})+\gamma_{2}P_{R}(\boldsymbol{\epsilon})+\gamma_{3}P_{A}(\boldsymbol{\epsilon})\gamma_{2}P_{R}(\boldsymbol{\epsilon})\big]^{-\frac{\beta_{A}}{3}}\big\}\nonumber\\
&&\times\mathit
p_{\boldsymbol{\gamma}}(\gamma_{1},\gamma_{2},\gamma_{3})d\gamma_{1}d\gamma_{2}d\gamma_{3}\bigg)\nonumber\\
&&-\frac{\omega_{B}}{\theta_{B}}
\ln\bigg(\int_{\boldsymbol{\gamma}\in \mathbb R_{1},\mathbb
R_{2}}\big[1+\gamma_{3}P_{B}(\boldsymbol{\epsilon})\big]^{-\frac{\beta_{B}}{3}}\mathit
p_{\boldsymbol{\gamma}}(\gamma_{1},\gamma_{2},\gamma_{3})\nonumber\\
&&\times
d\gamma_{1}d\gamma_{2}d\gamma_{3}+\int_{\boldsymbol{\gamma}\in
\mathbb R_{3},\mathbb
R_{4}}\max\big\{\big[1+\gamma_{2}P_{B}(\boldsymbol{\epsilon})\big]^{-\frac{\beta_{B}}{3}},\nonumber\\
&&\big[1+\gamma_{3}P_{B}(\boldsymbol{\epsilon})+\gamma_{1}P_{R}(\boldsymbol{\epsilon})+\gamma_{3}P_{B}(\boldsymbol{\epsilon})\gamma_{1}P_{R}(\boldsymbol{\epsilon})\big]^{-\frac{\beta_{B}}{3}}\big\}\nonumber\\
&&\times\mathit
p_{\boldsymbol{\gamma}}(\gamma_{1},\gamma_{2},\gamma_{3})d\gamma_{1}d\gamma_{2}d\gamma_{3}\bigg)+\lambda_{A}\big(\overline{P_{A}}-\mathbb{E}_{\boldsymbol\gamma}[P_{A}(\boldsymbol{\epsilon})]\big)\nonumber\\
&&+\lambda_{B}\big(\overline{P_{B}}-\mathbb{E}_{\boldsymbol\gamma}[P_{B}(\boldsymbol{\epsilon})]\big)+\lambda_{R}\big(\overline{P_{R}}-\mathbb{E}_{\boldsymbol\gamma}[P_{R}(\boldsymbol{\epsilon})]\big),
\end{eqnarray}
where $p_{\boldsymbol\gamma}$ is the distribution function of
$\boldsymbol\gamma$.

\subsubsection{Region $\mathbb R_{1}(\gamma_{1}\leq\gamma_{3}, \gamma_{2}\leq\gamma_{3})$}Setting the partial derivative at $P_{A}(\boldsymbol{\epsilon})$
equal to zero, we can obtain
\begin{eqnarray}
\int_{\boldsymbol{\gamma}\in \mathbb
R_{1}}&&\bigg\{\frac{\omega_{A}}{\theta_{A}\phi_{1}}\cdot\frac{\beta_{A}\gamma_{3}}{3}\big[1+\gamma_{3}P_{A}(\boldsymbol{\epsilon})\big]^{-\frac{\beta_{A}+3}{3}}-\lambda_{A}\bigg\}\nonumber\\
&&\times\mathit
p_{\boldsymbol{\gamma}}(\gamma_{1},\gamma_{2},\gamma_{3})d\gamma_{1}d\gamma_{2}d\gamma_{3}=0,
\end{eqnarray}
which gives
\begin{equation}
\frac{\omega_{A}\gamma_{3}}{\sigma\phi_{1}}\big[1+\gamma_{3}P_{A}(\boldsymbol{\epsilon})\big]^{-\frac{\beta_{A}+3}{3}}=\lambda_{A}.
\end{equation}
Hence, the optimal $P_{A}(\boldsymbol{\epsilon})$ can be obtained as
in (\ref{eqn:e14}). The optimal $P_{B}(\boldsymbol{\epsilon})$ can
be achieved using the same method.

\subsubsection{Region $\mathbb R_{2}(\gamma_{1}>\gamma_{3}, \gamma_{2}\leq\gamma_{3})$}It is easy to see that the maximum $R_A(\boldsymbol{\epsilon})$ is
achieved when
\begin{equation}
C\big(\gamma_{1}P_{A}(\boldsymbol{\epsilon})\big)=C\big(\gamma_{3}P_{A}(\boldsymbol{\epsilon})\big)+C\big(\gamma_{2}P_{R}(\boldsymbol{\epsilon})\big),
\end{equation}
which in turn gives
\begin{equation}\label{eqn:e34}
P_{R}(\boldsymbol{\epsilon})=\frac{(\gamma_{1}-\gamma_{3})P_{A}(\boldsymbol{\epsilon})}{\gamma_{2}\big[1+\gamma_{3}P_{A}(\boldsymbol{\epsilon})\big]}.
\end{equation}
We substitute (\ref{eqn:e34}) into (\ref{eqn:3t}), take the partial
derivative at $P_{A}(\boldsymbol{\epsilon})$, and then obtain
\begin{eqnarray}
\int_{\boldsymbol{\gamma}\in \mathbb R_{2}}
&&\bigg\{\frac{\omega_{A}}{\theta_{A}\phi_{1}}\cdot\frac{\beta_{A}\gamma_{1}}{3}\big[1+\gamma_{1}P_{A}(\boldsymbol{\epsilon})\big]^{-\frac{\beta_{A}+3}{3}}\nonumber\\
&&-\frac{\lambda_{R}(\gamma_{1}-\gamma_{3})}{\gamma_{2}}\cdot\frac{1}{\big[1+\gamma_{3}P_{A}(\boldsymbol{\epsilon})\big]^{2}}-\lambda_{A}\bigg\}\nonumber\\
&&\times\mathit
p_{\boldsymbol{\gamma}}(\gamma_{1},\gamma_{2},\gamma_{3})d\gamma_{1}d\gamma_{2}d\gamma_{3}=0,
\end{eqnarray}
which yields (\ref{eqn:e16}). Therefore, the optimal
$P_{A}(\boldsymbol{\epsilon})$ is given in (\ref{eqn:e15}). The
power allocation for $P_{B}(\boldsymbol{\epsilon})$ is same as
$\mathbb R_{1}$.

\subsubsection{Region $\mathbb R_{3}(\gamma_{1}\leq\gamma_{3}, \gamma_{2}>\gamma_{3})$}We can use the similar method as
$\mathbb R_{2}$ to get the optimal power allocation for
$P_{A}(\boldsymbol{\epsilon}),P_{B}(\boldsymbol{\epsilon})$.

\subsubsection{Region $\mathbb R_{4}(\gamma_{1}>\gamma_{3}, \gamma_{2}>\gamma_{3})$}According to (7) and (9), the maximum rates are achieved when
\begin{eqnarray}
C\big(\gamma_{1}P_{A}(\boldsymbol{\epsilon})\big)&=&C\big(\gamma_{3}P_{A}(\boldsymbol{\epsilon})\big)+C\big(\gamma_{2}P_{R}(\boldsymbol{\epsilon})\big),
\end{eqnarray}
\begin{eqnarray}
C\big(\gamma_{2}P_{B}(\boldsymbol{\epsilon})\big)&=&C\big(\gamma_{3}P_{B}(\boldsymbol{\epsilon})\big)+C\big(\gamma_{1}P_{R}(\boldsymbol{\epsilon})\big),
\end{eqnarray}
which offers
\begin{equation}\label{eqn:e35}
P_{R}(\boldsymbol{\epsilon})=\frac{(\gamma_{1}-\gamma_{3})P_{A}(\boldsymbol{\epsilon})}{\gamma_{2}\big[1+\gamma_{3}P_{A}(\boldsymbol{\epsilon})\big]}=\frac{(\gamma_{2}-\gamma_{3})P_{B}(\boldsymbol{\epsilon})}{\gamma_{1}\big[1+\gamma_{3}P_{B}(\boldsymbol{\epsilon})\big]}.
\end{equation}
Therefore,
\begin{equation}
P_{A}(\boldsymbol{\epsilon})=\frac{P_{B}(\boldsymbol{\epsilon})}{\tau+(\tau-1)\gamma_{3}P_{B}(\boldsymbol{\epsilon})},
\end{equation}
\begin{equation}\label{eqn:e36}
P_{B}(\boldsymbol{\epsilon})=\frac{\tau
P_{A}(\boldsymbol{\epsilon})}{1+(1-\tau)\gamma_{3}P_{A}(\boldsymbol{\epsilon})},
\end{equation}
where $\tau$ is given in (\ref{eqn:tau}). If $0<\tau\leq1$, it
always holds that if $P_{A}(\boldsymbol{\epsilon})>0$, then
$P_{B}(\boldsymbol{\epsilon})>0$. Otherwise, if $\tau>1$, then if
$P_{B}(\boldsymbol{\epsilon})>0$, it is sure that
$P_{A}(\boldsymbol{\epsilon})>0$. In the following, we assume
$0<\tau\leq1$, while the power allocation policies for $\tau>1$ can
be obtained in the same manner.

Substituting (\ref{eqn:e35}) and (\ref{eqn:e36}) into (\ref{eqn:3t})
and taking the the partial derivative of
$P_{A}(\boldsymbol{\epsilon})$, we can get
\begin{eqnarray}
\int_{\boldsymbol{\gamma}\in \mathbb R_{4}}
&&\bigg\{\frac{\omega_{A}}{\theta_{A}\phi_{1}}\cdot\frac{\beta_{A}\gamma_{1}}{3}\big[1+\gamma_{1}P_{A}(\boldsymbol{\epsilon})\big]^{-\frac{\beta_{A}+3}{3}}\nonumber\\
&&+\frac{\omega_{B}}{\theta_{B}\phi_{2}}\cdot\frac{\beta_{B}}{3}\cdot\frac{\tau\gamma_{2}}{\big[1+(1-\tau)\gamma_{3}P_{A}(\boldsymbol{\epsilon})\big]^{2}}\nonumber\\
&&\times\bigg[1+\frac{\tau
\gamma_{2}P_{A}(\boldsymbol{\epsilon})}{1+(1-\tau)\gamma_{3}P_{A}(\boldsymbol{\epsilon})}\bigg]^{-\frac{\beta_{B}+3}{3}}-\lambda_{A}\nonumber\\
&&-\frac{\tau\lambda_{B}}{\big[1+(1-\tau)\gamma_{3}P_{A}(\boldsymbol{\epsilon})\big]^{2}}-\frac{\lambda_{R}(\gamma_{1}-\gamma_{3})}{\gamma_{2}}\nonumber\\
&&\times\frac{1}{\big[1+\gamma_{3}P_{A}(\boldsymbol{\epsilon})\big]^{2}}\bigg\}\mathit
p_{\boldsymbol{\gamma}}(\gamma_{1},\gamma_{2},\gamma_{3})d\gamma_{1}d\gamma_{2}d\gamma_{3}=0,\nonumber\\
\end{eqnarray}
which can be simplified as equation (\ref{eqn:e18}). Thus, the
optimal $P_{A}$ is the solution of (\ref{eqn:e18}) as stated in
(\ref{eqn:e17}). The optimal $P_{B}$ can be obtained similarly.

\section{Derivation of Optimal $P_A(\boldsymbol{\epsilon})$ and $P_B(\boldsymbol{\epsilon})$ for Two-phase Two-way Relaying}\label{app:power}
According to the achievable rates in the decoding region $(\mathbb
R'_{1},\mathbb R'_{2})$, we can rewrite (\ref{eqn:e23}) without loss
of optimality as
\begin{eqnarray}\label{eqn:mac}
&&-\frac{\omega_{A}}{\theta_{A}}\ln\bigg(\int_{\boldsymbol{\gamma}\in
\mathbb
R'_{1}}\bigg[1+\frac{\gamma_{1}P_{A}(\boldsymbol{\epsilon})}{1+\gamma_{2}P_{B}(\boldsymbol{\epsilon})}\bigg]^{-\frac{\beta_{A}}{2}}\mathit
p_{\boldsymbol{\gamma}}(\gamma_{1},\gamma_{2})d\gamma_{1}d\gamma_{2}\nonumber\\
&&+\int_{\boldsymbol{\gamma}\in \mathbb
R'_{2}}\big[1+\gamma_{1}P_{A}(\boldsymbol{\epsilon})\big]^{-\frac{\beta_{A}}{2}}\mathit
p_{\boldsymbol{\gamma}}(\gamma_{1},\gamma_{2})d\gamma_{1}d\gamma_{2}\bigg)\nonumber\\
&&-\frac{\omega_{B}}{\theta_{B}}\ln\bigg(\int_{\boldsymbol{\gamma}\in
\mathbb
R'_{1}}\big[1+\gamma_{2}P_{B}(\boldsymbol{\epsilon})\big]^{-\frac{\beta_{B}}{2}}\mathit
p_{\boldsymbol{\gamma}}(\gamma_{1},\gamma_{2})d\gamma_{1}d\gamma_{2}\nonumber\\
&&+\int_{\boldsymbol{\gamma}\in \mathbb
R'_{2}}\bigg[1+\frac{\gamma_{2}P_{B}(\boldsymbol{\epsilon})}{1+\gamma_{1}P_{A}(\boldsymbol{\epsilon})}\bigg]^{-\frac{\beta_{B}}{2}}\mathit
p_{\boldsymbol{\gamma}}(\gamma_{1},\gamma_{2})d\gamma_{1}d\gamma_{2}\bigg)\nonumber\\
&&+\bigg(\int_{\boldsymbol{\gamma}\in \mathbb
R'_{1}}\frac{\mu_{A}}{\theta_{A}}\bigg[1+\frac{\gamma_{1}P_{A}(\boldsymbol{\epsilon})}{1+\gamma_{2}P_{B}(\boldsymbol{\epsilon})}\bigg]^{-\frac{\beta_{A}}{2}}\mathit
p_{\boldsymbol{\gamma}}(\gamma_{1},\gamma_{2})d\gamma_{1}d\gamma_{2}\nonumber\\
&&+\int_{\boldsymbol{\gamma}\in \mathbb
R'_{2}}\frac{\mu_{A}}{\theta_{A}}\big[1+\gamma_{1}P_{A}(\boldsymbol{\epsilon})\big]^{-\frac{\beta_{A}}{2}}\mathit
p_{\boldsymbol{\gamma}}(\gamma_{1},\gamma_{2})d\gamma_{1}d\gamma_{2}\bigg) \nonumber\\
&&+\bigg(\int_{\boldsymbol{\gamma}\in \mathbb
R'_{1}}\frac{\mu_{B}}{\theta_{B}}\big[1+\gamma_{2}P_{B}(\boldsymbol{\epsilon})\big]^{-\frac{\beta_{B}}{2}}\mathit
p_{\boldsymbol{\gamma}}(\gamma_{1},\gamma_{2})d\gamma_{1}d\gamma_{2}\nonumber\\
&&+\int_{\boldsymbol{\gamma}\in \mathbb
R'_{2}}\frac{\mu_{B}}{\theta_{B}}\bigg[1+\frac{\gamma_{2}P_{B}(\boldsymbol{\epsilon})}{1+\gamma_{1}P_{A}(\boldsymbol{\epsilon})}\bigg]^{-\frac{\beta_{B}}{2}}\mathit
p_{\boldsymbol{\gamma}}(\gamma_{1},\gamma_{2})d\gamma_{1}d\gamma_{2}\bigg) \nonumber\\
&&+\lambda_{A}\big(\overline{P_{A}}-\mathbb{E}_{\boldsymbol\gamma}[P_{A}(\boldsymbol{\epsilon})]\big)+\lambda_{B}\big(\overline{P_{B}}-\mathbb{E}_{\boldsymbol\gamma}[P_{B}(\boldsymbol{\epsilon})]\big),
\end{eqnarray}

The partial derivative of (\ref{eqn:mac}) with respect to
$P_{A}(\boldsymbol{\epsilon})$ is given by
\begin{eqnarray}\label{eqn:e37}
&&\hspace{-0.3cm}-\frac{\omega_{A}}{\theta_{A}\phi'_{1}}\bigg(\int_{\boldsymbol{\gamma}\in
\mathbb
R'_{1}}-\frac{\beta_{A}\gamma_{1}}{2\big[1+\gamma_{2}P_{B}(\boldsymbol{\epsilon})\big]}\nonumber\\
&&\hspace{-0.3cm}\times\bigg[1+\frac{\gamma_{1}P_{A}(\boldsymbol{\epsilon})}{1+\gamma_{2}P_{B}(\boldsymbol{\epsilon})}\bigg]^{-\frac{\beta_{A}}{2}-1}\mathit
p_{\boldsymbol{\gamma}}(\gamma_{1},\gamma_{2})d\gamma_{1}d\gamma_{2}\nonumber\\
&&\hspace{-0.3cm}+\int_{\boldsymbol{\gamma}\in \mathbb
R'_{2}}-\frac{\beta_{A}\gamma_{1}}{2}\big[1+\gamma_{1}P_{A}(\boldsymbol{\epsilon})\big]^{-\frac{\beta_{A}}{2}-1}\mathit
p_{\boldsymbol{\gamma}}(\gamma_{1},\gamma_{2})d\gamma_{1}d\gamma_{2}\bigg)\nonumber\\
&&\hspace{-0.3cm}-\frac{\omega_{B}}{\theta_{B}\phi'_{2}}\bigg(\int_{\boldsymbol{\gamma}\in
\mathbb
R'_{2}}\frac{\beta_{B}\gamma_{1}\gamma_{2}P_{B}(\boldsymbol{\epsilon})}{2\big[1+\gamma_{1}P_{A}(\boldsymbol{\epsilon})\big]^{2}}\nonumber\\
&&\hspace{-0.3cm}\times\bigg[1+\frac{\gamma_{2}P_{B}(\boldsymbol{\epsilon})}{1+\gamma_{1}P_{A}(\boldsymbol{\epsilon})}\bigg]^{-\frac{\beta_{B}}{2}-1}\mathit
p_{\boldsymbol{\gamma}}(\gamma_{1},\gamma_{2})d\gamma_{1}d\gamma_{2}\bigg)\nonumber\\
&&\hspace{-0.3cm}+\bigg(\int_{\boldsymbol{\gamma}\in \mathbb
R'_{1}}-\frac{\mu_{A}\beta_{A}\gamma_{1}}{2\theta_{A}\big[1+\gamma_{2}P_{B}(\boldsymbol{\epsilon})\big]}\nonumber\\
&&\hspace{-0.3cm}\times\bigg[1+\frac{\gamma_{1}P_{A}(\boldsymbol{\epsilon})}{1+\gamma_{2}P_{B}(\boldsymbol{\epsilon})}\bigg]^{-\frac{\beta_{A}}{2}-1}\mathit
p_{\boldsymbol{\gamma}}(\gamma_{1},\gamma_{2})d\gamma_{1}d\gamma_{2}\nonumber\\
&&\hspace{-0.3cm}+\int_{\boldsymbol{\gamma}\in \mathbb
R'_{2}}-\frac{\mu_{A}\beta_{A}\gamma_{1}}{2\theta_{A}}\big[1+\gamma_{1}P_{A}(\boldsymbol{\epsilon})\big]^{-\frac{\beta_{A}}{2}-1}\mathit
p_{\boldsymbol{\gamma}}(\gamma_{1},\gamma_{2})d\gamma_{1}d\gamma_{2}\bigg)\nonumber\\
&&\hspace{-0.3cm}+\bigg(\int_{\boldsymbol{\gamma}\in \mathbb
R'_{2}}\frac{\mu_{B}\beta_{B}\gamma_{1}\gamma_{2}P_{B}(\boldsymbol{\epsilon})}{2\theta_{B}\big[1+\gamma_{1}P_{A}(\boldsymbol{\epsilon})\big]^{2}}\bigg[1+\frac{\gamma_{2}P_{B}(\boldsymbol{\epsilon})}{1+\gamma_{1}P_{A}(\boldsymbol{\epsilon})}\bigg]^{-\frac{\beta_{B}}{2}-1}\nonumber\\
&&\hspace{-0.3cm}\times\mathit
p_{\boldsymbol{\gamma}}(\gamma_{1},\gamma_{2})d\gamma_{1}d\gamma_{2}\bigg)-\lambda_{A}.
\end{eqnarray}
By differentiating on $P_{B}(\boldsymbol{\epsilon})$ of
(\ref{eqn:mac}), the similar result can be obtained as
\begin{eqnarray}\label{eqn:e38}
&&-\frac{\omega_{A}}{\theta_{A}\phi'_{1}}\bigg(\int_{\boldsymbol{\gamma}\in\mathbb
R'_{1}}\frac{\beta_{A}\gamma_{1}\gamma_{2}P_{A}(\boldsymbol{\epsilon})}{2\big[1+\gamma_{2}P_{B}(\boldsymbol{\epsilon})\big]^{2}}\nonumber\\
&&\times\bigg[1+\frac{\gamma_{1}P_{A}(\boldsymbol{\epsilon})}{1+\gamma_{2}P_{B}(\boldsymbol{\epsilon})}\bigg]^{-\frac{\beta_{A}}{2}-1}\mathit
p_{\boldsymbol{\gamma}}(\gamma_{1},\gamma_{2})d\gamma_{1}d\gamma_{2}\bigg)\nonumber\\
&&-\frac{\omega_{B}}{\theta_{B}\phi'_{2}}\bigg(\int_{\boldsymbol{\gamma}\in
\mathbb
R'_{1}}-\frac{\beta_{B}\gamma_{2}}{2}\big[1+\gamma_{2}P_{B}(\boldsymbol{\epsilon})\big]^{-\frac{\beta_{B}}{2}-1}\nonumber\\
&&\times\mathit
p_{\boldsymbol{\gamma}}(\gamma_{1},\gamma_{2})d\gamma_{1}d\gamma_{2}+\int_{\boldsymbol{\gamma}\in
\mathbb
R'_{2}}-\frac{\beta_{B}\gamma_{2}}{2\big[1+\gamma_{1}P_{A}(\boldsymbol{\epsilon})\big]}\nonumber\\
&&\times\bigg[1+\frac{\gamma_{2}P_{B}(\boldsymbol{\epsilon})}{1+\gamma_{1}P_{A}(\boldsymbol{\epsilon})}\bigg]^{-\frac{\beta_{B}}{2}-1}\mathit
p_{\boldsymbol{\gamma}}(\gamma_{1},\gamma_{2})d\gamma_{1}d\gamma_{2}\bigg)\nonumber\\
&&+\bigg(\int_{\boldsymbol{\gamma}\in \mathbb
R'_{1}}\frac{\mu_{A}\beta_{A}\gamma_{1}\gamma_{2}P_{A}(\boldsymbol{\epsilon})}{2\theta_{A}\big[1+\gamma_{2}P_{B}(\boldsymbol{\epsilon})\big]^{2}}\bigg[1+\frac{\gamma_{1}P_{A}(\boldsymbol{\epsilon})}{1+\gamma_{2}P_{B}(\boldsymbol{\epsilon})}\bigg]^{-\frac{\beta_{A}}{2}-1}\nonumber\\
&&\times\mathit
p_{\boldsymbol{\gamma}}(\gamma_{1},\gamma_{2})d\gamma_{1}d\gamma_{2}\bigg)+\bigg(\int_{\boldsymbol{\gamma}\in
\mathbb
R'_{1}}-\frac{\mu_{B}\beta_{B}\gamma_{2}}{2\theta_{B}}\nonumber\\
&&\times\big[1+\gamma_{2}P_{B}(\boldsymbol{\epsilon})\big]^{-\frac{\beta_{B}}{2}-1}\mathit
p_{\boldsymbol{\gamma}}(\gamma_{1},\gamma_{2})d\gamma_{1}d\gamma_{2}\nonumber\\
&&+\int_{\boldsymbol{\gamma}\in \mathbb
R'_{2}}-\frac{\mu_{B}\beta_{B}\gamma_{2}}{2\theta_{B}\big[1+\gamma_{1}P_{A}(\boldsymbol{\epsilon})\big]}\bigg[1+\frac{\gamma_{2}P_{B}(\boldsymbol{\epsilon})}{1+\gamma_{1}P_{A}(\boldsymbol{\epsilon})}\bigg]^{-\frac{\beta_{B}}{2}-1}\nonumber\\
&&\times\mathit
p_{\boldsymbol{\gamma}}(\gamma_{1},\gamma_{2})d\gamma_{1}d\gamma_{2}\bigg)-\lambda_{B}.
\end{eqnarray}

Let the derivatives equal to zero. If $\boldsymbol\gamma \in \mathbb
R'_{1}$, from (\ref{eqn:e37}), the optimal condition should satisfy
\begin{eqnarray}
&&\int_{\boldsymbol{\gamma}\in\mathbb
R'_{1}}\bigg\{\bigg(\frac{\omega_{A}\beta_{A}}{2\theta_{A}\phi'_{1}}-\frac{\mu_{A}\beta_{A}}{2\theta_{A}}\bigg)\frac{\gamma_{1}}{1+\gamma_{2}P_{B}(\boldsymbol{\epsilon})}\nonumber\\
&&\times\bigg[1+\frac{\gamma_{1}P_{A}(\boldsymbol{\epsilon})}{1+\gamma_{2}P_{B}(\boldsymbol{\epsilon})}\bigg]^{-\frac{\beta_{A}+2}{2}}-\lambda_{A}\bigg\}\mathit
p_{\boldsymbol{\gamma}}(\gamma_{1},\gamma_{2})d\gamma_{1}d\gamma_{2}=0,\nonumber\\
&&
\end{eqnarray}
which gives
\begin{equation}\label{eqn:e39}
\frac{\gamma_{1}(\omega_{A}{\phi'_{1}}^{-1}
-\mu_{A})}{1+\gamma_{2}P_{B}(\boldsymbol{\epsilon})}\bigg[1+\frac{\gamma_{1}P_{A}(\boldsymbol{\epsilon})}{1+\gamma_{2}P_{B}(\boldsymbol{\epsilon})}\bigg]^{-\frac{\beta_{A}+2}{2}}=\delta\lambda_{A}.
\end{equation}
From (\ref{eqn:e38}), we can find the following optimality condition
\begin{eqnarray}
&&\hspace{-0.5cm}\int_{\boldsymbol{\gamma}\in\mathbb
R'_{1}}\bigg\{\bigg(\frac{\omega_{B}\beta_{B}}{2\theta_{B}\phi'_{2}}
-\frac{\mu_{B}\beta_{B}}{2\theta_{B}}\bigg)\gamma_{2}\big[1+\gamma_{2}P_{B}(\boldsymbol{\epsilon})\big]^{-\frac{\beta_{B}+2}{2}}\nonumber\\
&&\hspace{-0.5cm}-\lambda_{B}-\bigg(\frac{\omega_{A}\beta_{A}}{2\theta_{A}\phi'_{1}}
-\frac{\mu_{A}\beta_{A}}{2\theta_{A}}\bigg)\frac{\gamma_{1}\gamma_{2}P_{A}(\boldsymbol{\epsilon})}{\big[1+\gamma_{2}P_{B}(\boldsymbol{\epsilon})\big]^{2}}\nonumber\\
&&\hspace{-0.5cm}\times\bigg[1+\frac{\gamma_{1}P_{A}(\boldsymbol{\epsilon})}{1+\gamma_{2}P_{B}(\boldsymbol{\epsilon})}\bigg]^{-\frac{\beta_{A}+2}{2}}
\bigg\}\mathit
p_{\boldsymbol{\gamma}}(\gamma_{1},\gamma_{2})d\gamma_{1}d\gamma_{2}=0,
\end{eqnarray}
which offers
\begin{eqnarray}\label{eqn:e40}
&&\hspace{-0.28cm}\gamma_{2}(\omega_{B}{\phi'_{2}}^{-1}
-\mu_{B})\big[1+\gamma_{2}P_{B}(\boldsymbol{\epsilon})\big]^{-\frac{\beta_{B}+2}{2}}-\delta\lambda_{B}\nonumber\\
&&\hspace{-0.28cm}-\frac{(\omega_{A}{\phi'_{1}}^{-1}
-\mu_{A})\gamma_{1}\gamma_{2}P_{A}(\boldsymbol{\epsilon})}{\big[1+\gamma_{2}P_{B}(\boldsymbol{\epsilon})\big]^{2}}\bigg[1+\frac{\gamma_{1}P_{A}(\boldsymbol{\epsilon})}{1+\gamma_{2}P_{B}(\boldsymbol{\epsilon})}\bigg]^{-\frac{\beta_{A}+2}{2}}=0.\nonumber\\
\end{eqnarray}
Replacing (\ref{eqn:e39}) into (\ref{eqn:e40}), and after some
calculations, we can obtain the optimal power allocation of
$P_{A}(\boldsymbol{\epsilon})$ and $P_{B}(\boldsymbol{\epsilon})$ in
(\ref{eqn:pa}) and (\ref{eqn:pb}).

\section{Derivation of Optimal Partition Criterion for Two-phase Two-way Relaying}\label{app:partition}

Here we only focus on the derivation for the threshold
$\gamma_{2}^{th}$, while $\gamma_{1}^{th}$ can be obtained by the
same way. Let we write the optimal $\gamma_{2}^{th}$ as a function
of $\gamma_{1}$, i.e., $\gamma_{2}^{th}=f^{*}(\gamma_{1})$, where
$f^{*}(\gamma_{1})$ is the optimal function. We define
$f(\gamma_{1})=f^{*}(\gamma_{1})+sg(\gamma_{1})$, where $s$ is any
constant and $g(\gamma_{1})$ represents arbitrary variation. Thus,
(\ref{eqn:e23}) can be rewritten as
\begin{eqnarray}
&&~\mathcal{J}(f(\gamma_{1}))\nonumber\\
&&\hspace{-0.3cm}=-\frac{\omega_{A}}{\theta_{A}}\ln\bigg(\int_0^\infty\int_0^{f(\gamma_{1})}\bigg[1+\frac{\gamma_{1}P_{A}(\boldsymbol{\epsilon})}{1+\gamma_{2}P_{B}(\boldsymbol{\epsilon})}\bigg]^{-\frac{\beta_{A}}{2}}\mathit
p_{\boldsymbol{\gamma}}(\gamma_{1},\gamma_{2})\nonumber\\
&&\hspace{-0.3cm}\times
d\gamma_{1}d\gamma_{2}+\int_0^\infty\int_{f(\gamma_{1})}^{\infty}\big[1+\gamma_{1}P_{A}(\boldsymbol{\epsilon})\big]^{-\frac{\beta_{A}}{2}}\mathit
p_{\boldsymbol{\gamma}}(\gamma_{1},\gamma_{2})d\gamma_{1}d\gamma_{2}\bigg)\nonumber\\
&&\hspace{-0.3cm}-\frac{\omega_{B}}{\theta_{B}}\ln\bigg(\int_0^\infty\int_0^{f(\gamma_{1})}\big[1+\gamma_{2}P_{B}(\boldsymbol{\epsilon})\big]^{-\frac{\beta_{B}}{2}}\mathit
p_{\boldsymbol{\gamma}}(\gamma_{1},\gamma_{2})d\gamma_{1}d\gamma_{2}\nonumber\\
&&\hspace{-0.3cm}+\int_0^\infty\int_{f(\gamma_{1})}^{\infty}\bigg[1+\frac{\gamma_{2}P_{B}(\boldsymbol{\epsilon})}{1+\gamma_{1}P_{A}(\boldsymbol{\epsilon})}\bigg]^{-\frac{\beta_{B}}{2}}\mathit
p_{\boldsymbol{\gamma}}(\gamma_{1},\gamma_{2})d\gamma_{1}d\gamma_{2}\bigg)\nonumber\\
&&\hspace{-0.3cm}+\bigg(\int_0^\infty\int_0^{f(\gamma_{1})}\frac{\mu_{A}}{\theta_{A}}\bigg[1+\frac{\gamma_{1}P_{A}(\boldsymbol{\epsilon})}{1+\gamma_{2}P_{B}(\boldsymbol{\epsilon})}\bigg]^{-\frac{\beta_{A}}{2}}\mathit
p_{\boldsymbol{\gamma}}(\gamma_{1},\gamma_{2})d\gamma_{1}d\gamma_{2}\nonumber\\
&&\hspace{-0.3cm}+\int_0^\infty\int_{f(\gamma_{1})}^{\infty}\frac{\mu_{A}}{\theta_{A}}\big[1+\gamma_{1}P_{A}(\boldsymbol{\epsilon})\big]^{-\frac{\beta_{A}}{2}}\mathit
p_{\boldsymbol{\gamma}}(\gamma_{1},\gamma_{2})d\gamma_{1}d\gamma_{2}\bigg) \nonumber\\
&&\hspace{-0.3cm}+\bigg(\int_0^\infty\int_0^{f(\gamma_{1})}\frac{\mu_{B}}{\theta_{B}}\big[1+\gamma_{2}P_{B}(\boldsymbol{\epsilon})\big]^{-\frac{\beta_{B}}{2}}\mathit
p_{\boldsymbol{\gamma}}(\gamma_{1},\gamma_{2})d\gamma_{1}d\gamma_{2}\nonumber\\
&&\hspace{-0.3cm}+\int_0^\infty\int_{f(\gamma_{1})}^{\infty}\frac{\mu_{B}}{\theta_{B}}\bigg[1+\frac{\gamma_{2}P_{B}(\boldsymbol{\epsilon})}{1+\gamma_{1}P_{A}(\boldsymbol{\epsilon})}\bigg]^{-\frac{\beta_{B}}{2}}\mathit
p_{\boldsymbol{\gamma}}(\gamma_{1},\gamma_{2})d\gamma_{1}d\gamma_{2}\bigg)\nonumber\\
&&\hspace{-0.3cm}+\lambda_{A}\big(\overline{P_{A}}-\mathbb{E}_{\boldsymbol\gamma}[P_{A}(\boldsymbol{\epsilon})]\big)+\lambda_{B}\big(\overline{P_{B}}-\mathbb{E}_{\boldsymbol\gamma}[P_{B}(\boldsymbol{\epsilon})]\big).
\end{eqnarray}
Intuitively, $\mathcal{J}\big(f(\gamma_{1})\big)$ gains its optimal
value when $s=0$, namely, the derivative of
$\mathcal{J}\big(f(\gamma_{1})\big)$ is equal to zero when $s=0$.
Therefore, to obtain the optimal condition, it is necessary to
satisfy \cite{Arfken}
\begin{equation}
\frac{\emph{d}}{\emph{d}s}\big(\mathcal{J}\big(f(\gamma_{1})\big)\big)\bigg|_{s=0}=0.
\end{equation}
Then, it follows that
\begin{eqnarray}
&&\int_0^\infty\bigg\{-\bigg(\frac{\omega_{A}}{\theta_{A}\phi'_{1}}-\frac{\mu_{A}}{\theta_{A}}\bigg)\bigg(\bigg[1+\frac{\gamma_{1}P_{A}(\boldsymbol{\epsilon})}{1+f^{*}(\gamma_{1})P_{B}(\boldsymbol{\epsilon})}\bigg]^{-\frac{\beta_{A}}{2}}\nonumber\\
&&-\big[1+\gamma_{1}P_{A}(\boldsymbol{\epsilon})\big]^{-\frac{\beta_{A}}{2}}\bigg)-\bigg(\frac{\omega_{B}}{\theta_{B}\phi'_{2}}-\frac{\mu_{B}}{\theta_{B}}\bigg)\nonumber\\
&&\times\bigg(\big[1+f^{*}(\gamma_{1})P_{B}(\boldsymbol{\epsilon})\big]^{-\frac{\beta_{B}}{2}}-\bigg[1+\frac{f^{*}(\gamma_{1})P_{B}(\boldsymbol{\epsilon})}{1+\gamma_{1}P_{A}(\boldsymbol{\epsilon})}\bigg]^{-\frac{\beta_{B}}{2}}\bigg)\bigg\}\nonumber\\
&&\times\mathit
p_{\boldsymbol{\gamma}}(\gamma_{1},f^{*}(\gamma_{1}))g(\gamma_{1})d\gamma_{1}=0.
\end{eqnarray}
Since the above equation needs to be hold for any $g(\gamma_{1})$,
we can obtain
\begin{eqnarray}\label{eqn:e41}
&&\frac{\big[1+\frac{\gamma_{1}P_{A}(\boldsymbol{\epsilon})}{1+f^{*}(\gamma_{1})P_{B}(\boldsymbol{\epsilon})}\big]^{-\frac{\beta_{A}}{2}}-\big[1+\gamma_{1}P_{A}(\boldsymbol{\epsilon})\big]^{-\frac{\beta_{A}}{2}}}
{\big[1+\frac{f^{*}(\gamma_{1})P_{B}(\boldsymbol{\epsilon})}{1+\gamma_{1}P_{A}(\boldsymbol{\epsilon})}\big]^{-\frac{\beta_{B}}{2}}-\big[1+f^{*}(\gamma_{1})P_{B}(\boldsymbol{\epsilon})\big]^{-\frac{\beta_{B}}{2}}}\nonumber\\
&=&\frac{\theta_{A}(\omega_{B}{\phi'_{2}}^{-1}-\mu_{B})}
{\theta_{B}(\omega_{A}{\phi'_{1}}^{-1}-\mu_{A})}\triangleq K.
\end{eqnarray}
Obviously, $K\geq0$ for all cases. Specifically,
$\gamma_{2}^{th}=f^{*}(\gamma_{1})$ should be positive, otherwise it
would not be well-defined in (\ref{eqn:e41}).

%\bibliographystyle{IEEEtran}
%\bibliography{multicell}

\begin{IEEEbiography}
%[{\includegraphics[width=1in,height=1.25in,clip,keepaspectratio]{Cen_Lin_photo.eps}}]
{Cen
Lin} received the B.S. degree in electrical engineering from
Shanghai Jiao Tong University, Shanghai, China, in 2010. He is
currently pursuing a dual M.S degree in electrical and computer
engineering from Shanghai Jiao Tong University and Georgia Institute
of Technology. His research interests include cooperative
communications, physical layer network coding, and resource
allocation for QoS provisioning.
\end{IEEEbiography}

\begin{IEEEbiography}
%[{\includegraphics[width=1in,height=1.25in,clip,keepaspectratio]{Yuan_Liu_photo.eps}}]
{Yuan Liu}(S'11)
received the B.S. degree from Hunan University of Science and
Technology, Xiangtan, China, in 2006, and the M.S. degree from
Guangdong University of Technology, Guangzhou, China, in 2009, both
in Communications Engineering and with the highest honors. He is
currently pursuing his Ph.D. degree at the Department of Electrical
Engineering in Shanghai Jiao Tong University. His current research
interests include cooperative communications, network coding,
resource allocation, physical layer security, MIMO and OFDM
techniques.

He is the recipient of the Guangdong Province Excellent Master
Theses Award in 2010. He has been honored as an Exemplary Reviewer
of the \textsc{IEEE Communications Letters}. He is also awarded the
IEEE Student Travel Grant for IEEE ICC 2012. He is a student member
of the IEEE.
\end{IEEEbiography}

\begin{IEEEbiography}
%[{\includegraphics[width=1in,height=1.25in,clip,keepaspectratio]{Meixia_Tao_photo.eps}}]
{Meixia Tao}(S'00-M'04-SM'10)
received the B.S. degree in electronic engineering from Fudan
University, Shanghai, China, in 1999, and the Ph.D. degree in
electrical and electronic engineering from Hong Kong University
of Science and Technology in 2003. She is currently an Associate
Professor with the Department of Electronic Engineering, Shanghai
Jiao Tong University, China. From August 2003 to August 2004, she
was a Member of Professional Staff at Hong Kong Applied Science
and Technology Research Institute Co. Ltd. From August 2004 to
December 2007, she was with the Department of Electrical and
Computer Engineering, National University of Singapore, as an
Assistant Professor. Her current research interests include
cooperative transmission, physical layer network coding, resource
allocation of OFDM networks, and MIMO techniques.

Dr. Tao is an Editor for the \textsc{IEEE Wireless Communications
Letters}, an Associate Editor for the \textsc{IEEE Communications
Letters} and an Editor for the \emph{Journal of Communications and
Networks}. She  was on the  Editorial  Board of the \textsc{IEEE
Transactions on Wireless Communications} from 2007 to 2011. She
served as Track/Symposium Co-Chair for APCC09, ChinaCom09, IEEE
ICCCN07, and IEEE ICCCAS07. She has also served as Technical Program
Committee member for various conferences, including IEEE INFOCOM,
IEEE GLOBECOM, IEEE ICC, IEEE WCNC, and IEEE VTC.

Dr. Tao is the recipient of the IEEE ComSoC Asia-Pacific Outstanding
Young Researcher Award in 2009.
\end{IEEEbiography}

\end{document}